\definecolor{toc}{RGB}{13,55,174}	
\newtheorem{theorem}{Theorem}[section]
\newtheorem{lemma}{Lemma}[theorem]
\newtheorem{corollary}{Corollary}[theorem]
\newcommand{\dist}{\mathcal{D}}
\newcommand{\boxes}{\mathcal{B}}
\newcommand{\lp}{\left}
\newcommand{\rp}{\right}
\newcommand{\E}[2]{\mathbb{E}_{#1}\lp[#2 \rp]}
\renewcommand{\Pr}[2]{\textbf{Pr}_{#1}\lp[#2 \rp]}
\newcommand{\e}{\varepsilon}
\newcommand{\s}{\sigma}
\renewcommand{\vec}{\bm}
\newcommand\opt{\text{OPT}}
\newcommand\alg{\text{ALG}}
\newcommand{\ind}[1]{\mathbbm{1}{ \lp\{ #1 \rp\} }}
\newcommand{\scenarios}{\mathcal{S}}
\newcommand{\pb}{\mathcal{PB}}
\newcommand{\tree}{\mathcal{T}}
\newcommand{\pbText}{\textsc{Pandora's Box}}
\newcommand{\msscText}{\textsc{Min Sum Set Cover}}
\newcommand{\child}{\text{child}}
\newcommand{\poly}{\text{poly}}
\date{} 
\title{Weitzman's Rule for Pandora's Box with Correlations}
\author{
Evangelia Gergatsouli \\ UW-Madison \\ {\tt evagerg@cs.wisc.edu} \and 
Christos Tzamos \\ UW-Madison \& \\
University of Athens \\ {\tt tzamos@wisc.edu} 
}
\begin{document}

\maketitle

\begin{abstract}
    \pbText{}  is a central problem in decision making under uncertainty that can model various real life scenarios. In this problem we are given $n$ boxes, each with a fixed opening cost, and an unknown value drawn from a known distribution, only revealed if we pay the opening cost. Our goal is to find a strategy for opening boxes to minimize the sum of the value selected and the opening cost paid.

In this work we revisit \pbText{} when the value distributions are correlated, first studied in \cite{ChawGergTengTzamZhan2020}. We show that the optimal algorithm for the independent case, given by Weitzman's rule, directly works for the correlated case. In fact, it results in significantly improved approximation guarantees than the previous work. We also show how to implement the rule given only sample access to the correlated distribution of values. Specifically, we find that a number of samples that is polynomial in the number of boxes is sufficient for the algorithm to work.

\end{abstract}
\setcounter{page}{0}
\thispagestyle{empty}
\newpage

\section{Introduction}
In various minimization problems where uncertainty exists in the input, we are allowed to obtain information to remove this uncertainty by paying an extra price. Our goal is to sequentially decide which piece of information to acquire next, in order to minimize the sum of the search cost and the value of the option we chose. 

This family of problems is naturally modeled by \pbText{}, first formulated by Weitzman~\cite{Weit1979} in an economics setting. In this problem
where we are given $n$ boxes, each containing a value drawn from a known distribution and each having a fixed known \emph{opening cost}. We can only see the exact value realized in a box if we open it and pay the opening cost. Our goal is to minimize the sum of the value we select and the opening costs of the boxes we opened. 

In the original work of Weitzman, an optimal solution was proposed when the distributions on the values of the boxes were independent~\cite{Weit1979}. 
%
This algorithm was based on calculating a \emph{reservation value} ($\s$) for each box, and then 
choosing the box with the lowest reservation value to open at every step.
Given that independence is an unrealistic assumption in real life, \cite{ChawGergTengTzamZhan2020} first studied the problem where the distributions are correlated, and designed an algorithm giving a constant approximation guarantee. This algorithm is 
quite involved, it requires solving an LP to convert the \pbText{} instance to a \msscText{} one, and then solving this instance to obtain an ordering of opening the boxes. Finally, it reduces the problem of deciding when to stop to an online algorithm question corresponding to \textsc{Ski-Rental}.

 

\subsection{Our Contribution}
In this work we revisit \pbText{} with correlations, and provide \textbf{simpler}, \textbf{learnable} algorithms with \textbf{better approximation guarantees}, that directly \textbf{generalize} Weitzman's reservation values. More specifically, our results are the following. 
%
\begin{itemize}
    \item \textbf{Generalizing}: we first show how the original reservation values given by Weitzman~\cite{Weit1979} can be generalized to work in correlated distributions, thus allowing us to use a version of their initial greedy algorithm.
    \item \textbf{Better approximation}: we give two different variants of our main algorithm, that each uses different updates on the distribution $\dist$ after every step. 
    
    \begin{enumerate}
        \item \emph{Variant 1: partial updates}. We condition on the algorithm not having stopped yet.
        \item \emph{Variant 2: full updates}. We condition on the exact value $v$ revealed in the box opened.
    \end{enumerate}
    Both variants improve the approximation given by \cite{ChawGergTengTzamZhan2020} from $9.22$ to $4.428$ for Variant 1 and to $5.828$ for Variant 2.

    \item \textbf{Simplicity}: our algorithms are greedy and only rely on the generalized version of the reservation value, while the algorithms in previous work rely on solving a linear program, and reducing first to \msscText{} then to \textsc{Ski-Rental}, making them not straightforward to implement. A $9.22$ approximation was also given in \cite{GergTzam2022}, which followed the same approach but bypassed the need to reduce to \msscText{} by directly rounding the linear program via randomized rounding.
    
    \item \textbf{Learnability}: we show how given sample access to the correlated distribution $\dist$ we are able to still maintain the approximation guarantees. Specifically, for Variant 1 only $\poly(n, 1/\e, \log (1/\delta))$ samples are are enough to obtain $4.428+\e$ approximation with probability at least $1-\delta$. Variant 2 is however impossible to learn.
\end{itemize}

Our analysis is enabled by drawing similarities from \pbText{} to \msscText{}, which corresponds to the special case of when the values inside the boxes are $0$ or $\infty$. For \msscText{} a simple greedy algorithm was shown to achieve the optimal $4$-approximation~\cite{FeigUrieLovaTeta2002}. Surprisingly, Weitzman's algorithm can be seen as a direct generalization of that algorithm. Our analysis follows the histogram method introduced in~\cite{FeigUrieLovaTeta2002}, for bounding the approximation ratio. However, we significantly generalize it to handle values in the boxes and work with tree-histograms required to handle the case with full-updates. 



\subsection{Related Work}
Since Weitzman's initial work~\cite{Weit1979} on \pbText{} there has been a renewed interest in studying this problem in various settings. Specifically~\cite{Dova2018,BeyhKlei2019} study \pbText{} when we can select a box without paying for it (
non-obligatory inspection), in~\cite{BoodFuscLazoLeon2020} there are tree or line constraints on the order in which the boxes can be opened. In~\cite{ChawGergTengTzamZhan2020, ChawGergMcmaTzam2021} the distributions on the values inside the boxes are correlated and the goal is to minimize the search and value cost, while finally in~\cite{BechDughPate2022} the task of searching over boxes is delegated by an agent to a principal, while the agent makes the final choice. The recent work of Chawla et al.~\cite{ChawGergTengTzamZhan2020} is the first one that explores the correlated distributions variant and gives the first approximation guarantees.

This problem can be seen as being part of the ``price of information" literature~\cite{CharFagiGuruKleiRaghSaha2002, GuptKuma2001, ChenJavdKarbBagnSrinKrau2015,
ChenHassKarbKrau2015}, where we can remove part of the uncertainty of the problem at hand by paying a price. In this line of work, more recent papers study the structure of approximately optimal rules for combinatorial problems~\cite{GoelGuhaMuna2006, GuptNaga2013,
AdamSvirWard2016, GuptNagaSing2016, GuptNagaSing2017, Sing2018, GuptJianSing2019}.  
 

For the special case of \msscText{}, since the original work of~\cite{FeigUrieLovaTeta2002}, there has been many follow-ups and generalizations where every set has a requirement of how many elements contained in it we need to choose~\cite{AzaGamzIftaYinr2009, BansGuptRavi2010, AzarGamz2010,SkutWill2011,ImSvirZwaa2012}. 

\section{Preliminaries}
In \pbText{} ($\pb$) we are given a set of $n$ boxes $\boxes$, each with a known opening cost $c_b\in\mathbb{R}^+$, and a 
distribution $\dist$ over a vector of unknown values $\vec v = (v_1,\ldots,v_n) \in  \mathbb{R}_+^d$ inside the boxes. Each box $b\in \boxes$, once it is opened, reveals the value $v_{b}$. The algorithm can open boxes sequentially, by paying the opening cost each time, and observe the value instantiated inside the box. The goal of the algorithm is to choose a box of small value, while spending as little cost as possible ``opening" boxes. Formally, denoting by $\mathcal{O} \subseteq \boxes$ the set of opened boxes, we want to minimize
\[\E{v\sim \dist}{\sum_{b\in \mathcal{O}}c_b + \min_{b\in \mathcal{O}} v_b }.\]

A \emph{strategy} for \pbText{} is an algorithm that in every step decides which is the next box to open and when to stop. A strategy can pick any open box to select at any time. To model this, we assume wlog that after a box is opened the opening cost becomes $0$, allowing us to select the value without opening it again. In its full generality, a strategy can make decisions based on every box opened and value seen so far. We call this the \emph{Fully-Adaptive} (FA) strategy.

\paragraph{Different Benchmarks.} As it was initially observed
in~\cite{ChawGergTengTzamZhan2020}, optimizing over the class of fully-adaptive strategies is intractable, therefore we consider the simpler benchmark of \emph{partially-adaptive} (PA) strategies. In this case, the algorithm has to fix the opening order of the boxes, while the stopping rule can arbitrarily depend on the values revealed.

\subsection{Weitzman's Algorithm}
When the distributions of values in the boxes are independent, Weitzman~\cite{Weit1979} described a greedy algorithm that is also the optimal strategy. 
In this algorithm, we first calculate an index for every box $b$, called \emph{reservation value} $\s_b$, defined as the value that satisfies the following equation
\begin{equation}\label{eq:reservation_value}
    \E{\vec v \sim \mathcal{D}}{(\s_b-v_b)^+} = c_b.
\end{equation}

Then, the boxes are ordered by increasing $\sigma_b$ and opened until the minimum value revealed is less than the next box in the order. Observe that this is a \emph{partially-adaptive} strategy.


\section{Competing with the Partially-Adaptive}\label{sec:main_algo}
We begin by showing how Weitzman's algorithm can be extended to correlated distributions. Our algorithm calculates a reservation value $\s$ for every box at each step, and opens the box $b\in \boxes$ with the minimum $\s_b$. We stop if the value is less than the reservation value calculated, and proceed in making this box \emph{free}; we can re-open this for no cost, to obtain the value just realized at any later point. The formal statement is shown in Algorithm~\ref{alg:weitzman_main}. 

We give two different variants based on the type of update we do after every step on the distribution $\dist$. In the case of partial updates, we only condition on $V_b>\sigma_b$, which is equivalent to the algorithm not having stopped. On the other hand, for full updates we condition on the exact value that was instantiated in the box opened. Theorem~\ref{thm:main_vs_pa} gives the approximation guarantees for both versions of this algorithm.

\begin{algorithm}[ht]
		\KwIn{Boxes with costs $c_i\in \mathbb{R}$, distribution over scenarios $\mathcal{D}$.}
       An unknown vector of values $v \sim \mathcal{D}$ is drawn\\
        \Repeat{termination}{
       Calculate $\sigma_b$ for each box $b \in \boxes$ by solving:
      $$
      \E{\vec v \sim \mathcal{D}}{(\s_b-v_b)^+} = c_b.
      $$\\
     Open box $b = \text{argmin}_{b\in \boxes}\sigma_b$\\
      Stop if the value the observed $V_b = v_b \le \sigma_b$\\
     $c_b\leftarrow 0$ \tcp{Box is always open now}
         Update the prior distribution
        \begin{itemize}
            \item[-] \textbf{Variant 1}: $\mathcal{D} \leftarrow \mathcal{D}|_{V_b > \sigma_b}$ (partial updates)
            \item[-] \textbf{Variant 2}: $\mathcal{D} \leftarrow \mathcal{D}|_{V_b = v_b} $ (full updates)
        \end{itemize} 
        }
		\caption{Weitzman's algorithm, for correlated $\dist$. }\label{alg:weitzman_main}
	\end{algorithm}

\begin{restatable}{theorem}{mainThm}\label{thm:main_vs_pa}
	Algorithm~\ref{alg:weitzman_main} is a $4.428$-approximation for Variant 1 and $5.828$-approximation for Variant 2 of \pbText{}
  against the partially-adaptive optimal.
\end{restatable}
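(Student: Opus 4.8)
My plan is to adapt the histogram/charging argument of Feige--Lovász--Tetali for \msscText{} to the setting with real-valued boxes, handling the two update rules separately. The starting point is to represent both the algorithm's cost and the partially-adaptive optimum's cost as areas under suitable monotone ``histogram'' curves, and then exhibit a geometric containment (up to scaling) of one region inside a dilate of the other. Concretely, for a fixed scenario realized by $\dist$, I would track two nonnegative, nonincreasing functions of a ``time'' parameter $t$: one whose integral equals the expected cost of our algorithm, and one whose integral equals the PA-optimum cost; the reservation-value defining equation \eqref{eq:reservation_value} is exactly what lets me equate the ``price'' paid in a step (the opening cost $c_b$) with the expected gain $\E{}{(\s_b-v_b)^+}$, so that opening a box contributes area in a controlled way. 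Summing/integrating these per-scenario identities against $\dist$ gives the total costs.

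The key structural step is the greedy exchange: because the algorithm always opens the box of minimum current reservation value $\s_b$, and because \eqref{eq:reservation_value} makes $\s_b$ a ``cost per unit of progress'' quantity, any alternative order — in particular the one used by the PA-optimum — can only do worse in the appropriate amortized sense. I would formalize this by a charging scheme: each unit of the optimum's histogram is charged to a portion of the algorithm's histogram, and the minimality of $\s_b$ bounds the charging ratio. For Variant~1 (partial updates), the state is just ``which boxes have been opened and that we haven't stopped,'' so a single histogram per scenario suffices and the analysis mirrors \msscText{} closely, yielding the $4.428 = 2+2\sqrt2$ bound after optimizing the geometric constant (the extra factor over $4$ comes from the value term interacting with the cost term). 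For Variant~2 (full updates), the state branches on the exact value seen, so instead of a single histogram I would build a \emph{tree-histogram}: the recursion tree of conditionings, with area accumulated along root-to-leaf paths, and carry out the same containment argument on trees; the extra branching loosens the constant to $5.828 = 3+2\sqrt2$.

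The main obstacle — and where the bulk of the work lies — is handling the value term $\min_{b\in\mathcal O} v_b$ simultaneously with the opening-cost term in a single potential. In \msscText{} values are $0$ or $\infty$, so ``stopping'' is a clean 0/1 event and the histogram is one-dimensional; here the stopping value is a continuous quantity that the algorithm pays, and the reservation value $\s_b$ is precisely the threshold balancing ``pay $c_b$ to open'' against ``accept current value.'' I would need to split the analysis at the threshold $\s_b$: the mass where $v_b \le \s_b$ contributes to the value term (and triggers stopping), while the mass where $v_b > \s_b$ rolls the cost forward, and \eqref{eq:reservation_value} says these two pieces are in exact balance in expectation. Getting the charging constant tight then reduces to a small one-parameter optimization of the form $\min_\lambda \max(\,\cdot\,)$ over the worst-case histogram shape, which I expect to produce the $1+\sqrt2$-type constants; I would also need a short argument that conditioning (Variant~2) or the ``not-stopped'' conditioning (Variant~1) keeps all reservation values well-defined and the recursion finite, e.g.\ by noting each step strictly decreases the number of unopened boxes.
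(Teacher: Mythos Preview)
Your high-level architecture is right and matches the paper: a Feige--Lov\'asz--Tetali style histogram argument for Variant~1, and a tree-histogram generalization for Variant~2, with the greedy minimality of the reservation value supplying the key comparison inequality. But several of the steps you gloss over are exactly where the work is, and one of your claimed constants is simply wrong.

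First, $4.428 \ne 2+2\sqrt2$ (the latter is $\approx 4.828$). The $5.828 = 3+2\sqrt2$ bound is what a direct two-histogram argument gives for \emph{both} variants; getting down to $4.428$ for Variant~1 requires a genuinely different, sharper treatment of the value term. In the paper, the optimum is split into \emph{two} separate histograms, $\opt_o$ (scenarios sorted by opening cost) and $\opt_v$ (scenarios sorted by chosen value), each scaled independently by parameters $\beta,\gamma$; the basic containment argument yields $\frac{1}{\beta\gamma(1-\gamma)}\opt_o + \frac{1}{(1-\beta)\gamma}\opt_v$, which optimizes to $3+2\sqrt2$. The improvement to $4.428$ comes from replacing the crude bound on the value contribution by a telescoping/harmonic-type sum over the rounds in which a given scenario's optimal value sits in the window $[(1-\beta)\gamma|R_t|,\gamma|R_t|]$, producing a $\frac{1}{\gamma}\log\frac{1}{1-\beta}$ factor instead of $\frac{1}{(1-\beta)\gamma}$; numerically optimizing $\beta,\gamma$ then gives $\approx 4.428$. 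Your sketch (``a small one-parameter optimization \ldots producing $1+\sqrt2$-type constants'') does not anticipate either the two-histogram split or this sharper harmonic analysis, so as written it cannot reach the stated bound.

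Second, your description of the bridge between $\alg$ and $\opt$ is too vague to carry weight. The actual inequality used is the alternative form $\sigma_b = \min_{A\subseteq R_t}\frac{c_b|R_t|+\sum_{s\in A}v_b^s}{|A|}$, which lets you instantiate $A$ with the set $L_b$ of scenarios that $\opt$ covers with box $b$ and then \emph{sum over the boxes $b$ that $\opt$ uses}; this summation step is what connects $\sigma_s$ to $c^*$ and $v^*$ simultaneously and is not the same as ``splitting at the threshold $\sigma_b$ where the two pieces balance.'' For Variant~2, the tree-histogram idea is correct, but the lemma you need --- that replacing every node's weights by the top-$\rho$ percentile of its subtree increases total cost by at most $1/\rho$ --- is a nontrivial inductive statement (the paper proves it via an integral/quantile argument on the sorted-weights histogram), and nothing in your outline indicates how you would establish it.
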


\begin{proof}
We seperately show the two components of this theorem in Theorems~~\ref{thm:main_vs_pa_partial} and \ref{thm:main_vs_pa_equal}.
\end{proof}

Observe that for independent distributions this algorithm is exactly the same as Weitzman's \cite{Weit1979}, since the product prior $\dist$ remains the same, regardless of the values realized. Therefore, the calculation of the reservation values does not change in every round, and suffices to calculate them only once at the beginning.

\paragraph{Scenarios} To proceed with the analysis of Theorem~\ref{thm:main_vs_pa}, we assume that $\dist$ is supported on a collection of $m$ vectors, $(\vec v^s)_{s \in \scenarios}$, which we call scenarios, and sometimes abuse notation to say that a scenario is sampled from the distribution $\dist$. 
We assume that all scenarios have equal probability. The general case with unequal probabilities follows by creating more copies of the higher probability scenarios until the distribution is uniform.

A scenario is \emph{covered} when the algorithm decides to stop and choose a value from the opened boxes. For a specific scenario $s\in\scenarios$ we denote by $c(s)$ the total opening cost paid by an algorithm before this scenario is covered and by $v(s)$ the value chosen for this scenario.


%
\paragraph{Reservation Values}
To analyze Theorem~\ref{thm:main_vs_pa}, we introduce a new way of defining the reservation values of the boxes that is equivalent to~\eqref{eq:reservation_value}.
For a box $b$, we have that

\[\s_b = \min_{A \subseteq \scenarios} \frac{c_b+ \sum_{s\in A} \Pr{\dist}{s} v^s_b}{\sum_{s\in A} \Pr{\dist}{s}} \]

The equivalence to~\eqref{eq:reservation_value}, follows since $\s_b$ is defined as the root of the expression
\begin{align*}
      \E{s \sim \mathcal{D}}{(\s_b-v^s_b)^+} & - c_b = \sum_{s\in \scenarios} \Pr{\dist}{s} (\s_b - v^s_b)^+ - c_b \\
      &= \max_{A \subseteq \scenarios} \sum_{s\in A} \Pr{\dist}{s} (\s_b - v^s_b)  - c_b.
\end{align*}

Thus, $\s_b$ is also the root of
\begin{align*}
      \max_{A \subseteq \scenarios}& \frac { \sum_{s\in A} \Pr{\dist}{s} (\s_b - v^s_b)  - c_b } {\sum_{s\in A} \Pr{\dist}{s}} 
      = \s_b - \min_{A \subseteq \scenarios} \frac { c_b + \sum_{s\in A} \Pr{\dist}{s} v^s_b } {\sum_{s\in A} \Pr{\dist}{s}}.
\end{align*}

This, gives our formula for computing $\s_b$, which we can further simplify using our assumption that all scenarios have equal probability. In this case, $ \Pr{\dist}{s}= 1/|\scenarios|$ which implies that
\begin{equation}\label{eq:equivalent_reserv}
\s_b = \min_{A \subseteq \scenarios} \frac{c_b|\scenarios|+\sum_{s\in A}v_s}{|A|}.
\end{equation}

\subsection{Conditioning on $V_b>\sigma_b$}\label{subsec:proof_algo}

We start by describing the simpler variant of our algorithm where after opening each box we update the distribution by conditioning on the event $V_b > \s_b$. This algorithm is \emph{partially adaptive}, since the order for each scenario does not depend on the actual value that is realized every time. At every step the algorithm will either stop or continue opening boxes conditioned on the event ``We have not stopped yet" which does not differentiate among the surviving scenarios.

\begin{restatable}{theorem}{mainThm}\label{thm:main_vs_pa_partial}
Algorithm~\ref{alg:weitzman_main} is a $4.428$-approximation  
  for \pbText{} 
  against the partially-adaptive optimal, when conditioning on $V_b > \sigma_b$.
\end{restatable}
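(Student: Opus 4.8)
The plan is to adapt the histogram/charging method of Feige–Lovász–Tetali for Min Sum Set Cover to the Pandora's Box setting with values. I would set up two "histograms": one for the algorithm ALG and one for the partially-adaptive optimum OPT, where each scenario $s \in \scenarios$ contributes a region whose area equals the cost that algorithm pays on $s$ (opening cost $c(s)$ plus selected value $v(s)$), and the total area of each histogram equals $|\scenarios|$ times the algorithm's expected cost. Concretely, for ALG I would order scenarios by the time they get covered and stack them so that the "width" at a given height counts the scenarios still alive, and the "height" tracks accumulated cost; the value term $v(s)$ is handled by appending a vertical strip of the appropriate length once the scenario is covered. The goal is then to show that a suitably scaled and shifted copy of the OPT-histogram fits inside the ALG-histogram, which yields the constant-factor bound; the constant $4.428 = 3 + 2\sqrt{2}$ should emerge from optimizing the scaling/shift parameters, exactly as the $4$ in MSSC comes from the $(1+x)$ vs $2x$ type comparison.

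First I would fix notation: at each step $t$ of the algorithm let $R_t \subseteq \scenarios$ be the set of scenarios not yet covered (all indistinguishable under partial updates, so the algorithm opens the same next box for all of them), let $b_t$ be that box, and let $\sigma^{(t)}$ be its reservation value computed from $\dist|_{R_t}$. The key structural fact I want is a per-step inequality relating the cost ALG incurs in step $t$ — namely $|R_t| \cdot c_{b_t}$ in opening cost spread over the alive scenarios, plus $\sum_{s \text{ covered at } t} v^s_{b_t}$ in value — to the quantity $|R_t| \cdot \sigma^{(t)}$, using the variational formula \eqref{eq:equivalent_reserv}: since $\sigma^{(t)} = \min_{A \subseteq R_t}(c_{b_t}|R_t| + \sum_{s \in A} v^s_{b_t})/|A|$, taking $A$ to be the set of scenarios that actually stop at this step gives a clean bound. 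Then I would compare against OPT: for each scenario $s$, OPT also fixes an opening order, and at the moment ALG opens $b_t$ the reservation value $\sigma^{(t)}$ is a lower bound on a natural "rate" that OPT must be paying among the scenarios in $R_t$ — because if OPT were covering the $R_t$-scenarios much more cheaply, some box would have a smaller reservation value than $b_t$ and the greedy rule would have picked it. This is the crux: translating "greedy picks the min-$\sigma$ box" into "the remaining scenarios cannot be covered by OPT at rate much better than $\sigma^{(t)}$."

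The hard part will be exactly this last step — making the charging from ALG's greedy choice to OPT's cost rigorous when the boxes contain real values rather than $0/\infty$, since a scenario covered "late and cheaply" by OPT versus "early but paying its value" must be accounted consistently on both sides. I expect to handle it by splitting each scenario's contribution into an opening-cost part and a value part and charging them separately, using the two terms in the numerator of \eqref{eq:equivalent_reserv} respectively; the value part behaves like the MSSC "element" charge and the opening-cost part is new. After the per-step inequalities are in place, the remainder is the geometric packing argument: I would write ALG's total as an integral $\int_0^\infty (\text{width of ALG histogram at height } h)\,dh$, do the same for OPT, and verify that scaling OPT's histogram horizontally by $\alpha$ and vertically by $\beta$ makes it fit under ALG's, then minimize $\alpha\beta$ subject to the fit constraint to obtain $3 + 2\sqrt{2} \approx 4.428$. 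I would also need a short argument that it suffices to prove the bound against the partially-adaptive optimum restricted to the discretized equal-probability scenario model, which the preliminaries already justify.
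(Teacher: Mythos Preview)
Your high-level plan---adapt the Feige--Lov\'asz--Tetali histogram argument, express $\alg$ as $\sum_t |A_t|\sigma_t$ via the variational formula~\eqref{eq:equivalent_reserv}, and charge the greedy $\sigma$ at each step against $\opt$'s rate on the surviving scenarios---is the right one and matches the paper. But there is a concrete numerical and structural gap.

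First, $3+2\sqrt{2}\approx 5.828$, not $4.428$. This is not a typo on your part that can be patched: the two-parameter scaling you describe (stretch $\opt$'s histogram by $\alpha$ horizontally and $\beta$ vertically, then optimize $\alpha\beta$) is exactly the argument that yields $5.828$, and the paper presents it as a warm-up. The paper in fact uses \emph{two} separate $\opt$ histograms---one for opening costs $\opt_o$ and one for selected values $\opt_v$---and a nested filtering of the surviving scenarios $R_t$: first keep the $\gamma$-fraction with smallest $\opt$ opening cost, then among those keep the $\beta$-fraction with smallest $\opt$ value. Summing the inequality $\sigma_s|L_b|\le c_b|R_t|+\sum_{s\in L_b}v_s^{\opt}$ over the boxes $b$ that $\opt$ uses on this doubly-filtered set $L$ gives $\sigma_s\le c^*/(\beta\gamma)+\frac{1}{|L|}\sum_{s\in L}v_s^{\opt}$. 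Bounding the value term crudely by $v^*$ and doing the histogram packing yields the $3+2\sqrt{2}$ constant.

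To reach $4.428$ the paper needs a genuinely sharper step that your plan does not contain: rather than upper-bound $\frac{1}{|L|}\sum_{s\in L}v_s^{\opt}$ by a single quantile $v^*$, they track, for each individual scenario $s$, across how many steps $t$ its value $v_s^{\opt}$ actually lands in the window $[(1-\beta)\gamma|R_t|,\gamma|R_t|]$ (when scenarios are sorted by $v^{\opt}$). This turns the value contribution into a harmonic-type sum $\frac{1}{\gamma}\sum_{i=j_s/\gamma}^{j_s/((1-\beta)\gamma)}\frac{1}{i}\le \frac{1}{\gamma}\log\frac{1}{1-\beta}$, replacing the cruder $\frac{1}{(1-\beta)\gamma}$ factor. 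Balancing $\frac{1}{\beta\gamma(1-\gamma)}$ against $\frac{1}{\gamma}\log\frac{1}{1-\beta}$ and numerically optimizing over $\beta,\gamma$ is what produces the $4.428$. Without this refinement your argument will stall at $5.828$.
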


In this section we show a simpler proof for Theorem~\ref{thm:main_vs_pa_partial} that gives a $3+2\sqrt{2} \approx 5.828$-approximation. The full proof for the $4.428$-approximation is given in section~\ref{apn:main_algo} of the Appendix. Using the equivalent definition of the reservation value (Equation~\eqref{eq:equivalent_reserv}) we can rewrite Algorithm~\ref{alg:weitzman_main} as follows.\\

%

\begin{algorithm}[H]
	\KwIn{Boxes with costs $c_i\in \mathbb{R}$, set of scenarios $\scenarios$.}
            $t\leftarrow 0$\\
		$R_0 \leftarrow \mathcal{S}$ the set of scenarios still uncovered\\
			\While{$R_t\neq \emptyset$}{
			 Let $\s_t \leftarrow \text{min}_{b\in \boxes, A\subseteq R_t} \frac{c_b|R_t| + \sum_{s\in A}v^s_{b}}{|A|}$\\
      Let $b_t$ and $A_t$ be the box and the set of scenarios that achieve the minimum\\
				Open box $b_t$ and pay $c_{b_t}$\\
    Stop and choose the value at box $b_t$ if it is less than $\s_t$: this holds \textbf{iff} $s \in A_t$\\
               Set $c_{b_t} \leftarrow 0$\\
				$R_t \leftarrow R_t \setminus A_t$\\
				$t\leftarrow t+1$
    }
			\caption{Weitzman's rule for Partial Updates}\label{alg:general_v}
	\end{algorithm}

We first start by giving a bound on the cost of the algorithm. 
The cost can be broken down into opening cost plus the value obtained. 
Since at any time $t$, all remaining scenarios $R_t$ pay the opening cost $c_{b_t}$, we have that the total opening cost is $$\sum_t c_{b_t} |R_t|.$$ Moreover,
the chosen value is given as $$\sum_t \sum_{s \in A_t} v^s_{b_t}. $$
Overall, we have that
\begin{align*}
\alg & = \sum_t \left( c_{b_t} |R_t| + \sum_{s \in A_t} v^s_{b_t} \right) 
= \sum_t |A_t| \frac {  c_{b_t} |R_t| + \sum_{s \in A_t} v^s_{b_t} } { |A_t|} = \sum_t |A_t| \s_t. 
\end{align*}

Defining $\s_s$ to be the reservation value of scenario $s$ at the time it is covered, i.e. when $s \in A_t$, we get $\alg = \sum_{s \in \scenarios} \s_s$. We follow a \emph{histogram analysis} similar to the proof of Theorem~4 in \cite{FeigUrieLovaTeta2004} for \msscText{} and construct the following histograms.
\begin{itemize}
	\item The $\opt_o$ histogram: put the scenarios on the x-axis on increasing
opening cost order $c_s^\opt$ according to $\opt$, the height of each scenario is the opening cost it paid. 
	\item The $\opt_v$ histogram: put the scenarios on the x-axis on increasing
covering value order $v_s^\opt$ according to $\opt$, the height of each scenario is the value with which it
was covered.
	\item The $\alg$ histogram: put scenarios on the x-axis in the order the algorithm covers them. The height of each scenario is $\s_s$. Observe that the area of the $\alg$ histogram is exactly the cost of the algorithm.
\end{itemize}


\begin{proof}[Proof of Theorem~\ref{thm:main_vs_pa_partial}]
Initially, observe that the algorithm will eventually stop; every time we open a box we cover at least one scenario (since line 3 is cannot be $\infty$ while scenarios are left uncovered).

	To show the approximation factor, we scale the histograms as follows; $\opt_o$ scale horizontally by $1/\alpha_o$ and vertically by $1/(\beta \cdot \gamma)$, and $\opt_v$ scale by $1/\alpha_v$ horizontally, for some constants $ \alpha_o, \alpha_v \in (0,1)$ to be determined later\footnote{Scaling horizontally means that we duplicate every scenario and scaling vertically we just multiply the height at every point by the scale factor.}.
  We align the $\alg$ histogram with $\opt_v$ and $\opt_o$ so that all of them have the same right-hand side. Observe that the optimal opening cost is the area below the histogram $\opt_o$ and has increased by $\beta\cdot \gamma \cdot \alpha_o$, and similarly the area below $\opt_v$ has increased by $\alpha_v$ as a result of the scaling.

To conclude the proof it suffices to show that any point in the $\alg$ histogram is inside the sum of the rescaled $\opt_v$ and $\opt_o$ histograms. Consider any point $p$ in the $\alg$ histogram, and let $s$ be its corresponding scenario and $t$ be the time this scenario is covered. We have that the height of the $\alg$ histogram is
\begin{equation}\label{eq:sigma_p}
	\sigma_s = \frac{c_{b_t} |R_t| + \sum_{s\in A_t}v^s_{b_t}}{|A_t|} 
		\leq \frac{c_b|R_t| + \sum_{s\in A}v^s_{b}}{|A|}
\end{equation}
where the last inequality holds for all $A\subseteq R_t$ and any $b\in \boxes$.

  Denote by $c^*$ the opening cost such that $\gamma|R_t|$ of the scenarios in $R_t$ have opening cost less than $c^*$, and by $R_{\text{low}} = \{s\in R_t: c_s^\opt \leq c^*\}$ the set of these scenarios. Similarly denote by $v^*$ the value of scenarios in $R_{\text{low}}$ such that $\beta |R_{\text{low}}|$ of the scenarios have value less than $v^*$ and by $L = \{s\in R_{\text{low}}: v_s^\opt\leq v^*\}$ these scenarios. This split is shown in Figure~\ref{fig:t_v_split}, and the constants $\beta, \gamma \in (0,1)$ will be determined at the end of the proof.

		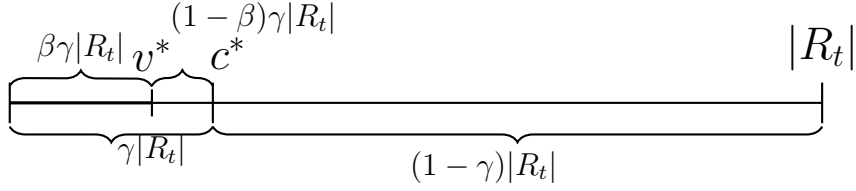
\begin{figure*}[ht]
				\centering
    \begin{tikzpicture}[scale=0.9]

\pgfmathtruncatemacro{\labelsY}{0.2};
\pgfmathtruncatemacro{\split}{3};

\draw[-,very thick, black] (0,0) -- (\split*0.7,0);
\draw[-,thick] (\split*0.7,0) -- (12,0);

\draw [decorate,black,thick,decoration={brace,amplitude=6pt}] (0,0.25) -- (\split*0.7,0.24) node[]{};
\draw [decorate,thick,decoration={brace,amplitude=6pt, mirror}] (0,-0.35) -- (\split,-0.35) node[]{};
\draw [decorate,thick,decoration={brace,amplitude=6pt, mirror}] (\split,-0.35)--(12,-0.35) node[]{};
\draw [decorate,thick,decoration={brace,amplitude=6pt}] (\split*0.7,0.25) -- (\split,0.25) node[]{};

\draw[-, thick] (0,0.3)-- (0,-0.3) node[]{};
\draw[-, thick] (12,0.3)-- (12,-0.3) node[label={[yshift=0.4cm]{ {\Large $|R_t|$}}}]{};
\draw[-, thick] (\split*0.7,0.2)-- (\split*0.7,-0.2) node[label={[yshift=0.3cm] {\Large $v^*$} }]{};
\draw[-, thick] (\split,0.3)-- (\split,-0.3) node[label={[xshift=0.2cm,yshift=0.42cm]{\Large $c^*$}}]{};

\draw (\split*0.7,-1.25) node[label={$\gamma|R_t|$}]{};
\draw (7,-1.5) node[label={$(1-\gamma)|R_t|$}]{};
\draw (\split*0.35,0.25) node[label={\textcolor{black}{$\beta \gamma|R_t|$}}]{};
\draw (1.2*\split,0.7) node[label={$(1-\beta)\gamma|R_t|$}]{};

\end{tikzpicture}
				\caption{Split of scenarios in $R_t$.}
    \label{fig:t_v_split}
\end{figure*}

Let $B_L$ be the set of boxes that the optimal solution uses to cover the scenarios in $L$. Let $L_b \subseteq L \subseteq R_t$ be the subset of scenarios in $L$ that choose the value at box $b$ in $\opt$. Using inequality~\eqref{eq:sigma_p} with $b\in B_L$ and $A = L_b$, we obtain $\sigma_s |L_b| \leq c_b |R_t| + \sum_{s \in L_b} v_s^\opt $, and by summing up the inequalities for all $b\in B_L$ we get 
\begin{align}\label{eq:sigma_UB}
\sigma_s &\leq \frac{|R_t| \sum_{b\in B_L} c_b + \sum_{s\in L} v_s^\opt}{|L|}\\
		&\leq \frac{|R_t| c^* + \sum_{s\in L} v_s^\opt}{|L|} 
		\leq \frac{c^*}{\beta \cdot \gamma} + \frac{\sum_{s\in L} v_s^\opt}{|L|}
\end{align}
where for the second inequality we used that the cost for covering the scenarios in $L$ is at most $c^*$ by construction, and in the last inequality that $|L| = |R_t|/(\beta \cdot \gamma)$. We consider each term above separately, to show that the point $p$ is within the histograms.

\paragraph{Bounding the opening cost.}
By the construction of $c^*$, the point in the $\opt_o$ histogram that has cost at least $c^*$ is at distance at least $(1-\gamma)|R_t|$ from the right hand side. This means that in the rescaled histogram, 
the point that has cost at least $c^*/(\beta\cdot \gamma)$ is at distance at least $(1-\gamma)|R_t| / \alpha_o$ from the right hand side.

On the other hand, in the $\alg$ histogram the distance of $p$ from the right edge of the histogram is at most $|R_t|$, therefore for the point $p$ to be inside the $\opt_o$ histogram we require 
\begin{equation}\label{eq:at}
		\alpha_o\leq 1-\gamma.
		\end{equation}

\paragraph{Bounding the values cost.}
 By the construction of $v^*$, the point in the $\opt_v$ histogram that has value $v^*$ is at distance at least $|R_t|(1-\beta )\gamma$ from the right hand side. This means that in the rescaled histogram, the point that has value at least $v^*$ is at distance at least $(1-\beta )\gamma|R_t| / \alpha_v$ from the right hand side.

On the other hand, in the $\alg$ histogram the distance of $p$ from the right edge of the histogram is at most $|R_t|$, therefore for the point $p$ to be inside the $\opt_o$ histogram we require 
	\begin{equation}\label{eq:av} 
			\alpha_v\leq (1-\beta )\gamma.
		\end{equation}

	We optimize the constants $\alpha_o, \alpha_v, \beta ,\gamma $ by ensuring that
	inequalities~\eqref{eq:at} and \eqref{eq:av} hold. We set $\alpha_o= 1-\gamma$ and 
	$\alpha_v = (1-\beta)\gamma$, and obtain that  
	$\alg \le \opt_o/(\beta\cdot \gamma \cdot (1-\gamma)) + \opt_v/((1-\beta)\gamma) $. 
	Requiring these to be equal we get $\beta=1/(2-\gamma)$, which is minimized for $\beta=1/\sqrt{2}$ and $
	\gamma=2-\sqrt{2}$ for a value of $3+2\sqrt{2}$.

\end{proof}

\subsection{Conditioning on $V_b=v$}\label{subsec:equal}
In this section we switch gears to our second variant of Algorithm~\ref{alg:weitzman_main}, where in each step we update the prior $\dist$ conditioning on the event $V_b = v$. We state our result in Theorem~\ref{thm:main_vs_pa_equal}. In this case, the conditioning on $\dist$ implies that the algorithm at every step removes the scenarios that are \emph{inconsistent} with the value realized. 

\begin{restatable}{theorem}{mainThmVarEqual}\label{thm:main_vs_pa_equal}
	Algorithm~\ref{alg:weitzman_main} is a $3+2\sqrt{2} \approx 5.828$-approximation 
  for \pbText{} against the partially-adaptive optimal, when conditioning on $V_b  = v$.
\end{restatable}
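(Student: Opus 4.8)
The plan is to mirror the histogram argument of Theorem~\ref{thm:main_vs_pa_partial}, but to adapt it to the fact that under full updates the algorithm no longer proceeds in one linear order: after opening a box and seeing value $v$, the surviving scenarios split into groups, each of which continues with its own conditioned distribution. Concretely, I would first reformulate Algorithm~\ref{alg:weitzman_main} with Variant~2 updates as a recursive procedure producing a \emph{tree} $\tree$ of ``states,'' where each node holds a set $R$ of still-uncovered scenarios, computes $\s = \min_{b\in\boxes,\,A\subseteq R} (c_b|R| + \sum_{s\in A} v^s_b)/|A|$, opens the minimizing box $b$, covers the minimizing set $A$ (the scenarios whose realized value is $\le \s$), and then \emph{branches}: one child per distinct realized value $v > \s$ among the scenarios of $R\setminus A$ that remain consistent, recursing on each such child's scenario set. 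Exactly as in the linear case, one shows $\alg = \sum_{s\in\scenarios}\s_s$ where $\s_s$ is the reservation value of the state at which $s$ gets covered, because at each node the $|A|$ newly-covered scenarios each ``pay'' $\s$ and the opening cost $c_b|R|$ is split among the $|A|$ of them — the key algebraic identity $c_b|R| + \sum_{s\in A}v^s_b = |A|\s$ is unchanged.

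Next I would build the three histograms ($\opt_o$, $\opt_v$, $\alg$) and the same horizontal/vertical rescalings ($\opt_o$ by $1/\alpha_o$ horizontally, $1/(\beta\gamma)$ vertically; $\opt_v$ by $1/\alpha_v$ horizontally), again aligning all three at a common right edge. The crux is the pointwise containment claim: for any point $p$ of the $\alg$ histogram, with scenario $s$ covered at tree-node with uncovered set $R$, we need $\s_s$ to lie within the sum of the rescaled $\opt_o$ and $\opt_v$ columns at the corresponding x-coordinate. The inequality $\s_s \le (|R|\sum_{b\in B_L}c_b + \sum_{s\in L}v^\opt_s)/|L|$ for any subset $L\subseteq R$ and the boxes $B_L$ that $\opt$ uses on $L$ carries over verbatim from inequality~\eqref{eq:sigma_UB}, since it only uses the defining minimization at the current node and the fact that $L\subseteq R$. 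So, as before, pick $c^*$ so that $\gamma|R|$ scenarios of $R$ have $\opt$-opening-cost $\le c^*$, let $R_{\text{low}}$ be those, pick $v^*$ so that $\beta|R_{\text{low}}|$ of them have $\opt$-value $\le v^*$, let $L$ be those, and conclude $\s_s \le c^*/(\beta\gamma) + (\sum_{s\in L}v^\opt_s)/|L|$.

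The step I expect to be the genuine obstacle is the horizontal counting that makes the containment go through, because with a tree the set $R$ at a node is only a subset of the full scenario set and the $\alg$ histogram interleaves scenarios from different branches. The needed facts are: (i) the distance of $p$ from the right edge of the $\alg$ histogram is at most $|\scenarios|$ — but we must compare against $|R|$, not $|\scenarios|$, so the argument must instead be \emph{localized}: I would argue that all scenarios in $R$ (hence in particular the $\ge(1-\gamma)|R|$ scenarios of $R$ with high $\opt$-opening-cost, and the $\ge(1-\beta)\gamma|R|$ scenarios of $R_{\text{low}}$ with high $\opt$-value) are covered at or after the current node, hence appear weakly to the right of $p$'s block in the $\alg$ ordering, provided we fix the $\alg$ x-ordering to be consistent with a DFS of $\tree$ (so a node's whole subtree is contiguous and sits where that node would). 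Then the $\ge(1-\gamma)|R|/\alpha_o$ rescaled-$\opt_o$ scenarios of cost $\ge c^*/(\beta\gamma)$ and the $\ge(1-\beta)\gamma|R|/\alpha_v$ rescaled-$\opt_v$ scenarios of value $\ge v^*$ all lie within distance $|R|$ of the right edge of $p$'s subtree, which forces the same constraints $\alpha_o \le 1-\gamma$ and $\alpha_v \le (1-\beta)\gamma$. Once these two inequalities are secured, the optimization over $\alpha_o,\alpha_v,\beta,\gamma$ is identical to the linear case: set $\alpha_o = 1-\gamma$, $\alpha_v = (1-\beta)\gamma$, require $1/(\beta\gamma(1-\gamma)) = 1/((1-\beta)\gamma)$, giving $\beta = 1/\sqrt2$, $\gamma = 2-\sqrt2$, and ratio $3+2\sqrt2$. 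I would finish by noting that, unlike Variant~1, this bound is not improved to $4.428$ because the tree structure blocks the refined (Appendix) accounting, and that the $5.828$ factor is what the clean histogram argument yields.
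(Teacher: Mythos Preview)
Your setup is sound: casting Variant~2 as a tree-structured recursion, the identity $\alg = \sum_{s\in\scenarios}\s_s$, and the node-local bound $\s_u \le c^*/(\beta\gamma) + v^*$ all carry over exactly as you describe, and they match the paper. The final optimization over $\beta,\gamma$ is also identical.

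The gap is precisely at the step you flag as ``the genuine obstacle.'' Your proposed fix --- order the $\alg$ histogram by a DFS so that each subtree is contiguous, and then compare against $\opt_o$ and $\opt_v$ ``within distance $|R|$ of the right edge of $p$'s subtree'' --- conflates two different $x$-axes. The $\opt_o$ and $\opt_v$ histograms are sorted \emph{globally} by opening cost and by value; there is no notion of ``$p$'s subtree'' inside them. Knowing that $(1-\gamma)|R_u|$ scenarios of $R_u$ have $\opt$-cost at least $c^*$ tells you that at least $(1-\gamma)|R_u|$ scenarios sit in the right tail of the global $\opt_o$ histogram, but it says nothing about where those scenarios sit relative to the DFS position of $p$ in the $\alg$ histogram. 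In the linear case the pointwise containment works only because $p$'s distance from the \emph{global} right edge of $\alg$ is at most $|R_t|$; in the tree case with any DFS, $p$'s distance from the global right edge equals $|R_u|$ plus the total size of every sibling subtree visited after $u$, which can be arbitrarily larger than $|R_u|$. So the inequality $\alpha_o \le 1-\gamma$ (and likewise $\alpha_v \le (1-\beta)\gamma$) cannot be derived this way.

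The paper's remedy is to abandon global histograms altogether and work with \emph{weighted trees}: it puts the multiset $(c^\opt_s)_{s\in A_u}$ (resp.\ $(v^\opt_s)_{s\in A_u}$) at each node $u$ to form trees $\tree_{\opt_o},\tree_{\opt_v}$ with the same shape as $\tree_\alg$, and then proves a dedicated tree-histogram lemma (Lemma~\ref{lem:general_histogram}): if at every node the weights are replaced by the top-$\rho$ percentile of all weights in the subtree rooted there, the total weight grows by at most a factor $1/\rho$. This lemma is the tree analogue of ``scale horizontally by $1/\rho$,'' and it is exactly what converts the per-node bound $\s_u \le c_u^*/(\beta\gamma) + v_u^*$ (where $c_u^*$, $v_u^*$ are these subtree percentiles) into the global inequality $\alg \le \frac{1}{\beta\gamma(1-\gamma)}\opt_o + \frac{1}{(1-\beta)\gamma}\opt_v$. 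Your proposal is missing this lemma or an equivalent accounting device; the DFS-localization idea does not substitute for it.
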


The main challenge was that the algorithm's solution is now a tree with respect to scenarios instead of a line as in the case of $\dist|_{V_b>\sigma_b}$. Specifically, in the $D|_{V_b>\sigma_b}$ variant at every step all scenarios that had $V_b \leq \sigma_b$ were covered and removed from consideration. However in the $D|_{V_b=v}$ variant the remaining scenarios are split into different cases, based on the realization of $V$, as shown in the example of Figure~\ref{fig:tree_solution_equal}.

\begin{figure}[ht]
    \centering
    \begin{tikzpicture}

\tikzset{My Style/.style={shape=circle, draw=black, minimum size=30pt}}

\pgfmathsetmacro{\spacing}{2.5}
\pgfmathsetmacro{\offsetx}{1}

\node [My Style,label={[yshift=-0.6cm,xshift=1.1cm]Open $b_2$}] (root) at (0,0) {$s_1, s_2, s_3$};

\node[My Style,label={[yshift=-0.6cm,xshift=-1cm]Stop}] (left) at (-\offsetx,-\spacing) {$s_1$};
\node[My Style,label={[yshift=-0.6cm,xshift=1.1cm]Open $b_1$}] (right) at (\offsetx,-\spacing) {$s_2, s_3$};

\draw[-] (root) -- node[left] {$V=2$} ++ (left);
\draw[-] (root) -- node[right] {$V=5$} ++ (right);

\node [My Style,label={[yshift=-0.6cm,xshift=-1cm]Stop}] (rightLeft) at (-0.3*\offsetx,-2*\spacing) {$s_3$};

\node [My Style,label={[yshift=-0.6cm,xshift=1cm]Stop}] (rightRight) at (2*\offsetx,-2*\spacing) {$s_2$};

\draw[-] (right) -- node[left] {$V=2$} ++ (rightLeft);
\draw[-] (right) -- node[right] {$V=1$} ++ (rightRight);

\end{tikzpicture}
    \caption{Algorithm's solution when conditioning on $V=v$, for an instance with scenarios $\scenarios=\{s_1, s_2, s_3\}$, and boxes $\boxes = \{b_1, b_2\}$. The nodes contain the consistent scenarios at each step, and the values $V$ are revealed once we open the corresponding box.}
    \label{fig:tree_solution_equal}
\end{figure}
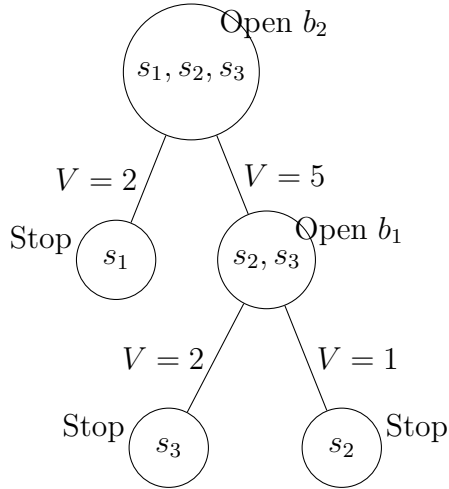

This results into the ALG histogram not being well defined, since there is no unique order of covering the scenarios. We overcome this by generalizing the histogram approach to trees.

\begin{proof}[Proof of Theorem~\ref{thm:main_vs_pa_equal}]
The proof follows similar steps to that of Theorem~\ref{thm:main_vs_pa_partial}, thus we only highlight the differences. The algorithm  is presented below, the only change is line 5 where we remove the inconsistent with the value revealed scenarios, which also leads to our solution branching out for different scenarios and forming a tree.

 \begin{algorithm}[ht]
				 \KwIn{Boxes with costs $c_i\in \mathbb{R}$, set of scenarios $\scenarios$.}
        Define a root node $u$ corresponding to the set $\mathcal{S}$\\
		 $R_u \leftarrow \mathcal{S}$ the set of scenarios still uncovered\\
			\While{$R_u\neq \emptyset$}{
				Let $\s_u \leftarrow \text{min}_{b\in \boxes, A\subseteq R_u} \frac{c_b|R_u| + \sum_{s\in A}v^s_{b}}{|A|}$\\
      Let $b_u$ and $A_u$ be the box and the set of scenarios that achieve the minimum\\
			Open box $b_u$ paying $c_{b_u}$ and observe value $v$\\
    Stop and choose the value at box $b_u$ if it is less than $\s_u$: this holds \textbf{iff} $s \in A_u$\\
               Set $c_{b_u} \leftarrow 0$\\
             Let $u'$ be a vertex corresponding to the set of consistent scenarios with
    $R_{u'} \triangleq R_u \setminus \lp(A_u \cup \{s\in R_u: v_{b_u}^s\neq v\}\rp)$ \tcp{Remove inconsistent scenarios}
			 	Set $u\leftarrow u'$
		}
			\caption{Weitzman's rule for Full Updates    }\label{alg:general_v_equal}
	\end{algorithm}

\paragraph{Bounding the opening cost}
Consider the tree $\tree$ of $\alg$ where at every node $u$ a set $A_u$ of scenarios is covered. We associate this  tree with node weights, where at every node $u$, we assign $|A_u|$ weights $(\s_u,...,\s_u)$. Denote, the weighted tree by $\tree_\alg$. As before, the total cost of $\alg$ is equal to the sum of the weights of the tree.

We now consider two alternative ways of assigning weights to the the nodes, forming trees $\tree_{\opt_o}$, $\tree_{\opt_v}$ using the following process. 
\begin{itemize}
    \item $\tree_{\opt_o}.$ At every node $u$ we create a vector of weights $\vec w^{\opt_o}_u = (c^\opt_{s})_{s \in A_u}$ where each $c^\opt_{s}$ is the opening cost that scenario $s \in A_u$ has in the optimal solution.
    \item $\tree_{\opt_v}.$ At every node $u$ we create a vector of weights $\vec w^{\opt_v}_u = (v^\opt_{s})_{s \in A_u}$ where each $v^\opt_{s}$ is the value the optimal uses to cover scenario $s \in A_u$.
\end{itemize}
We denote by $\text{cost}(\tree_\alg)$ the sum of all weights in every node of the tree $\tree$.
We have that $\text{cost}(\tree)$ is equal to the total cost of $\alg$, while 
$\text{cost}(\tree_{\opt_o})$ and $\text{cost}(\tree_{\opt_v})$ is equal to the optimal opening cost $\opt_o$ and optimal value $\opt_v$ respectively. Intuitively, the weighted trees correspond to the histograms in the previous analysis of Theorem~\ref{thm:main_vs_pa_partial}.

We want to relate the cost of $\alg$, to that of $\tree_{\opt_o}$ and $\tree_{\opt_v}$. To do this, we define an operation similar to histogram scaling, which replaces the weights of every node $u$ in a tree with the top $\rho$-percentile of the weights in the subtree rooted at $u$.  As the following lemma shows, this changes the cost of a tree by a bounded multiplicative factor.

\begin{restatable}{lemma}{genHist}\label{lem:general_histogram}
Let $\tree$ be a tree with a vector of weights $\vec w_u$ at each node $u\in \tree$, and let ${ \tree^{(\rho)} }$ be the tree we get when we substitute the weights of every node with the top $\rho$-percentile of all the weights in the subtree of $\tree$ rooted at $u$. Then%
\[\rho \cdot \text{cost}(\tree^{(\rho)}) \leq cost (\tree).\]
\end{restatable}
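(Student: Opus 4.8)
The plan is to prove Lemma~\ref{lem:general_histogram} by a charging argument that distributes each weight in the original tree $\tree$ across the nodes on its root-to-node path, then verifies that in $\tree^{(\rho)}$ every node's weight vector is dominated, entry by entry, by the weights it received through this charging. First I would fix notation: for a node $u$, let $S_u$ denote the subtree rooted at $u$, let $W_u$ be the multiset of all weights appearing in $S_u$ (so $|W_u| = \sum_{w \in S_u} (\text{length of } \vec w_w)$), and let $\tau_u$ be the value at the top $\rho$-percentile of $W_u$, i.e. the $(\lceil \rho |W_u| \rceil)$-th largest element. By definition $\tree^{(\rho)}$ assigns to node $u$ the vector consisting of $|W_u|$ copies of... wait — more precisely I should check the intended semantics: ``substitute the weights of every node with the top $\rho$-percentile of all the weights in the subtree'' most naturally means node $u$ keeps the same number of entries but each is replaced by $\tau_u$; I will state this explicitly. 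Then $\text{cost}(\tree^{(\rho)}) = \sum_u |\vec w_u| \cdot \tau_u$ where $|\vec w_u|$ is the original length of $u$'s weight vector.

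The key combinatorial fact is that the percentile thresholds are monotone along any root-to-leaf path: if $u$ is an ancestor of $u'$ then $W_{u'} \subseteq W_u$, hence the top $\rho$-percentile threshold can only increase as we move toward the root, $\tau_{u'} \le \tau_u$. Given this, the main estimate is to bound, for each node $u$, the quantity $\rho \cdot |\vec w_u| \cdot \tau_u$ against a suitable portion of $\text{cost}(\tree)$. The natural way is: consider the path from the root $r$ down to $u$, say $r = u_0, u_1, \dots, u_k = u$. I claim that the weights in $S_u$ that are at least $\tau_u$ number at least $\rho |W_u| \ge \rho |W_u| $; but since $|W_u| \ge |\vec w_u|$ we have $\rho |\vec w_u| \le \rho |W_u| \le (\text{number of weights in } S_u \text{ that are} \ge \tau_u)$. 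This lets me set up an injection, for each node $u$, from $\rho$ (fractional) copies of the entries of $\vec w_u$ in $\tree^{(\rho)}$ into large-enough weights of the original tree lying in $S_u$, and then argue these injections can be made globally consistent because the subtrees are nested — I would process the tree top-down, at each node reserving a $\rho$-fraction of the yet-unclaimed ``large'' weights in its subtree, which is feasible precisely because at node $u$ the number of weights in $S_u$ exceeding $\tau_u$ is a full $\rho |W_u|$ fraction while only the proper descendants (whose reserved sets are subsets of $S_u$) and $u$ itself compete for them, and an averaging/Hall-type argument closes it.

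Concretely the cleanest route is probably a direct inequality rather than an explicit injection: prove by induction on the tree (from leaves up) the stronger statement that $\rho \cdot \text{cost}(\tree^{(\rho)}_{S_u}) \le (\text{sum of the } \lceil \rho|W_u|\rceil \text{ largest weights in } W_u)$ where $\tree^{(\rho)}_{S_u}$ is the rescaled subtree; since the right-hand side is at most $\text{cost}(\tree|_{S_u})$ (it's a sub-sum of all weights in $S_u$... actually equal to the sum of the largest $\lceil\rho|W_u|\rceil$ of them, which is $\le$ the total), taking $u = r$ gives the lemma. For the inductive step at $u$ with children $u_1, \dots, u_\ell$: $\text{cost}(\tree^{(\rho)}_{S_u}) = |\vec w_u|\tau_u + \sum_j \text{cost}(\tree^{(\rho)}_{S_{u_j}})$, apply the inductive hypothesis to each child, use $\tau_{u_j} \le \tau_u$, and combine with the observation that the multiset union of the top-$\rho$ parts of the $W_{u_j}$, together with $\rho|\vec w_u|$ copies of entries of $\vec w_u$, all consist of values that (after the threshold comparison) can be matched into the top-$\lceil\rho|W_u|\rceil$ weights of $W_u$ — this uses $|W_u| = |\vec w_u| + \sum_j |W_{u_j}|$ and a counting check that $\lceil\rho|\vec w_u|\rceil + \sum_j \lceil \rho|W_{u_j}|\rceil \le \lceil \rho |W_u|\rceil$ up to the rounding slack (which is where I should be a little careful, possibly absorbing ceilings by first reducing to the case where all $\rho|W_u|$ are integers, e.g. by duplicating weights, exactly as the paper reduces to uniform scenario probabilities).

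The main obstacle I expect is handling the interaction of the percentile rounding (ceilings) with the nested-subtree counting: naively $\lceil \rho a\rceil + \lceil \rho b\rceil$ can exceed $\lceil \rho(a+b)\rceil$, so the matching of ``reserved large weights'' could overflow by the accumulated rounding error across many nodes. The fix is to observe the lemma is scale/duplication invariant and pass to a limit (or assume WLOG, as done elsewhere in the paper, that quantities are chosen so all the relevant $\rho|W_u|$ are integers), after which the counting identity $|\vec w_u| + \sum_j |W_{u_j}| = |W_u|$ makes the reservation argument go through cleanly and the monotonicity $\tau_{u'}\le \tau_u$ does the rest.
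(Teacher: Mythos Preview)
Your final ``cleanest route'' --- proving by induction on subtrees the strengthened statement that $\rho \cdot \text{cost}(\tree^{(\rho)}_{S_u})$ is at most the sum of the top $\rho|W_u|$ weights in $W_u$ --- is exactly the paper's proof. The paper writes this top-fraction sum as a histogram integral $Q(\rho\mid\tree_u)$, which sidesteps your ceiling-rounding worry by working with fractional counts from the start, but the induction and the two-step comparison at each internal node are the same.

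There is one genuine error you should remove. Your ``key combinatorial fact'' that $\tau_{u'} \le \tau_u$ whenever $u$ is an ancestor of $u'$ is false: take $W_{u'} = \{100\}$ and $W_u = \{100,0,0,0,0\}$ with $\rho = 1/2$; then $\tau_{u'} = 100$ while $\tau_u = 0$. Subset containment of multisets says nothing about quantile monotonicity. Fortunately you do not need this claim. The inductive step goes through purely by counting: the union of the top-$\rho$ parts of the children's multisets is \emph{some} subset of $W_u$ of size $\rho(|W_u|-|\vec w_u|)$, hence its sum is at most the sum of the largest $\rho(|W_u|-|\vec w_u|)$ weights in $W_u$; and the remaining $\rho|\vec w_u|$ weights needed to fill out the top $\rho|W_u|$ of $W_u$ are each at least $\tau_u$ by the definition of $\tau_u$, which absorbs the root term $\rho|\vec w_u|\tau_u$. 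Drop the monotonicity sentence and the parenthetical ``use $\tau_{u_j}\le\tau_u$'' and your sketch is complete.
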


We defer the proof of Lemma~\ref{lem:general_histogram} to Section~\ref{apn:equal} of the Appendix.
To complete the proof of Theorem~\ref{thm:main_vs_pa_equal}, and bound $\text{cost}(\tree_\alg)$, we show as before that the weights at every node $u$, are bounded by the weights of $\tree^{(1-\gamma)}_{\opt_o}$ scaled by $\frac 1 {\beta \gamma}$ plus the weights of $\tree^{((1-\beta)\gamma)}_{\opt_v}$, for the constants $\beta, \gamma \in (0,1)$ chosen in the proof of Theorem~\ref{thm:main_vs_pa_partial}. This implies that
\begin{align*}
\text{cost}(\tree_{\opt_o})
\le &
\frac 1 {\beta \gamma} \text{cost}(\tree^{(1-\gamma)}_{\opt_o})
+
\text{cost}(\tree^{((1-\beta)\gamma)}_{\opt_v})\\
\le&
\frac 1 {\beta \gamma (1\text{-}\gamma)} \text{cost}(\tree_{\opt_o})
+
\frac 1 {(1\text{-}\beta)\gamma}
\text{cost}(\tree_{\opt_v})
\end{align*}
which gives $\alg \le 5.828\, \opt $ for the choice of $\beta$ and $\gamma$.
The details of the proof are similar to the one of Theorem~\ref{thm:main_vs_pa}, and are deferred to section~\ref{apn:equal} of the Appendix.

\end{proof}

\subsection{Lower Bound}

To show that our algorithm is almost tight, we observe that the lower bound of Min Sum Set Cover presented in~\cite{FeigUrieLovaTeta2004} also applies to \pbText{}. In \msscText{} we are given $n$ elements $e_i$, and $m$ sets $s_j$ where each $s_j \subseteq [n]$. We say a set $s_j$ \emph{covers} an element $e_i$  if $e_i \in s_j$ . The goal is to select elements in order to minimize the sum of the \emph{covering times} of all the sets, where \emph{covering time} of a set is he first time an element $e_i\in s_j$ is chosen. 

In \cite{FeigUrieLovaTeta2004} the authors show that \msscText{} cannot be approximated better than $4-\e$ even in the special case where every set contains the same number of elements\footnote{Equivalently forms a uniform hypergraph, where sets are hyperedges, and elements are vertices.}. We restate the theorem below.

\begin{theorem}[Theorem 13 of \cite{FeigUrieLovaTeta2004}]\label{thm:hardness_feige}
    For every $\e> 0$, it is NP-hard to approximate min sum set cover within a
ratio of $4 - \e$ on uniform hypergraphs.
\end{theorem}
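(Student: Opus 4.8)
The statement is a hardness-of-approximation result, so the plan is to exhibit a gap-preserving reduction from a problem already known to be NP-hard to approximate. The natural source is \textsc{Label Cover} (equivalently, a two-prover one-round game), which by the PCP theorem together with parallel repetition is NP-hard to distinguish between value $1$ and value at most $\gamma$, for every constant $\gamma>0$. The target is a $k$-uniform instance of \msscText{}, i.e.\ a $k$-uniform hypergraph whose vertices are the ``elements'' and whose hyperedges are the ``sets.'' The whole game is to design a gadget $\Phi \mapsto H(\Phi)$ so that the \emph{coverage profile} — the fraction of hyperedges still uncovered after the first $t$ vertices have been selected — is forced to look very different in the two cases: steeply decreasing (most hyperedges covered almost immediately) when $\mathrm{val}(\Phi)=1$, and necessarily slow (essentially the extremal ``flat'' profile) when $\mathrm{val}(\Phi)\le\gamma$.

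First I would fix the source instance and a repetition/uniformity parameter $k$, and build $H(\Phi)$ with vertex set indexed by (node, label) pairs of $\Phi$ together with a carefully controlled amount of dummy padding, and with hyperedges encoding the projection/consistency constraints of $\Phi$ (again padded so that the hypergraph is exactly $k$-uniform and so that an \emph{honest} ordering realizes the precise extremal histogram shape). Second, completeness: from a labeling satisfying all constraints, read off an ordering in which successive ``batches'' of vertices each kill a constant fraction of the surviving hyperedges; summing the resulting cover times gives $\mathrm{MSSC}(H(\Phi)) \le (c+\e)\,N$, where $N$ is the number of hyperedges and $c$ is a small constant fixed by the batch geometry. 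Third, soundness: assuming some ordering achieves cost $< (4c-\e)N$, a Markov-type averaging argument forces a noticeable fraction of hyperedges to be covered by a short prefix of vertices, and, by the way the hyperedges encode the constraints of $\Phi$, such a prefix can be decoded into a (randomized) labeling of expected value exceeding $\gamma$ — a contradiction. Finally I would optimize $k$ and the batch geometry so that the ratio $(4c-\e)/(c+\e)$ tends to $4$; this optimization is the hardness-side mirror of the continuous optimization that makes the greedy/\weitz{}-style histogram analysis tight at $4$ (compare the choice of $\beta,\gamma$ in the proof of Theorem~\ref{thm:main_vs_pa_partial}).

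The hard part will be the soundness step together with pinning the constant at exactly $4$: extracting \emph{some} constant-factor NP-hardness from such a gadget is routine, but the reduction has to be ``tight'' against the $4$-approximation, meaning the padding must be arranged so that in the NO case there is provably no ordering whose coverage profile beats the extremal flat one by more than $\e$ — any slack there yields only a weaker constant. This is where the combinatorics of the gadget does all the work: how dummy vertices and edges are spread across the batches, and how local consistency of $\Phi$ translates into a lower bound on how many vertices one must spend to cover a $\rho$-fraction of hyperedges. Once Theorem~\ref{thm:hardness_feige} is in hand, the observation from the introduction that \msscText{} is exactly the special case of \pbText{} with values in $\{0,\infty\}$ transfers the bound to \pbText{}, so the $4.428$- and $5.828$-approximations of Theorems~\ref{thm:main_vs_pa_partial} and~\ref{thm:main_vs_pa_equal} are within a small constant factor of optimal.
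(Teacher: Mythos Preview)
The paper does not prove Theorem~\ref{thm:hardness_feige}; it is quoted verbatim from~\cite{FeigUrieLovaTeta2004} as an external result and then used, via the observation that \msscText{} is the $\{0,\infty\}$-valued special case of \pbText{}, to derive the corollary that \pbText{} is NP-hard to approximate better than $4-\e$. There is therefore no ``paper's own proof'' to compare your proposal against.

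Your sketch is a reasonable high-level outline of how the original hardness result in~\cite{FeigUrieLovaTeta2004} is obtained --- a gap-preserving reduction from a PCP-type source (there, essentially a Max-3SAT/Label-Cover instance with the Raz repetition) to a uniform hypergraph whose coverage profile is forced to be nearly flat in the NO case --- and you correctly identify that the delicate part is the soundness analysis pinning the constant at exactly $4$. But since the present paper simply cites this theorem and moves on, a proof here is unnecessary; if you want to include anything, a one-line pointer to~\cite{FeigUrieLovaTeta2004} suffices.
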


Our main observation is that \msscText{} is a special case of \pbText{}. When the boxes all have the same opening cost $c_b=1$ and the values inside are $v^b_s \in \{0,\infty\}$, we are required to find a $0$ for each scenario; equivalent to \emph{covering} a scenario. The optimal solution of \msscText{} is an algorithm that selects elements one by one, and stops whenever all the sets are covered. This is exactly the partially adaptive optimal we defined for \pbText{}. The theorem restated above results in the following Corollary.

\begin{corollary}
    It is NP-Hard to approximate Pandora's Box against the partially-adaptive better than $4-\e$.
\end{corollary}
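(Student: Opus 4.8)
The plan is to establish the Corollary as an immediate consequence of Theorem~\ref{thm:hardness_feige} (Theorem 13 of~\cite{FeigUrieLovaTeta2004}) via the reduction sketched just before the statement, namely that \msscText{} embeds into \pbText{} as the special case where every box has unit opening cost and every value lies in $\{0,\infty\}$. First I would make the correspondence precise: given a \msscText{} instance with elements $e_1,\dots,e_n$ and sets $s_1,\dots,s_m$, build a \pbText{} instance with one box $b_i$ per element (cost $c_{b_i}=1$) and one scenario per set $s_j$, setting $v^{s_j}_{b_i}=0$ if $e_i\in s_j$ and $v^{s_j}_{b_i}=\infty$ otherwise. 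Then I would argue the two objective values coincide: a partially-adaptive strategy for this \pbText{} instance is, without loss of generality, a fixed order in which boxes are opened — since all values are $0$ or $\infty$, the only sensible stopping rule for scenario $s_j$ is to stop the first time a box containing a $0$ for $s_j$ is opened, because stopping earlier incurs value $\infty$ and stopping later only adds opening cost. The cost paid for scenario $s_j$ is then exactly the number of boxes opened up to and including the first $b_i$ with $e_i\in s_j$, which is precisely the covering time of $s_j$ in the \msscText{} instance under the corresponding element ordering. Averaging over scenarios (all equiprobable, matching the uniform weighting assumed in the Preliminaries) gives $\frac1m$ times the \msscText{} objective, so the PA-optimal \pbText{} value equals $\frac1m \cdot \opt_{\msscText{}}$.

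Next I would observe that this identity is cost-preserving up to the common factor $\frac1m$: any algorithm for \pbText{} that $\alpha$-approximates the PA-optimal, when run on an instance arising from the reduction, yields an element ordering whose \msscText{} cost is within $\alpha$ of $\opt_{\msscText{}}$ — one reads off the ordering from the order in which the algorithm opens boxes (breaking ties and ignoring re-openings of already-free boxes arbitrarily). Since the reduction is polynomial-time and preserves the uniform-hypergraph structure (each scenario ``contains'' exactly $|s_j|$ zero-entries, so if the \msscText{} instance is a uniform hypergraph the restriction in Theorem~\ref{thm:hardness_feige} is respected), an efficient $(4-\e)$-approximation for \pbText{} against the PA-optimal would give an efficient $(4-\e)$-approximation for \msscText{} on uniform hypergraphs, contradicting NP-hardness. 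Hence no such \pbText{} algorithm exists unless P${}={}$NP, which is the claimed statement.

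The only subtlety — and the single step that needs a line of care rather than being purely mechanical — is justifying the ``without loss of generality'' in the structure of the optimal PA strategy, i.e.\ that restricting attention to stopping-at-first-zero loses nothing, and that infinities cannot be ``selected'' in any optimal or near-optimal solution. This is straightforward here because $\infty$ dominates everything: any strategy that selects an $\infty$ value has infinite cost, so it is never competitive, and among finite-cost strategies the stopping time for each scenario is forced. One should also note that the PA benchmark for \pbText{} fixes the opening order but allows the stopping rule to depend on realized values — exactly matching the freedom of choosing an element order in \msscText{} with the implicit ``stop when covered'' rule — so the benchmarks line up without slack. I would keep the write-up short, as the hard content is entirely in the cited hardness theorem and the reduction is essentially a change of notation.
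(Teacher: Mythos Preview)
Your proposal is correct and takes essentially the same approach as the paper: both argue that \msscText{} is a special case of \pbText{} with unit opening costs and $\{0,\infty\}$ values, so that the $(4-\e)$-hardness of Theorem~\ref{thm:hardness_feige} transfers directly. You supply more detail than the paper (which treats the corollary as immediate from the preceding paragraph), but the reduction, the identification of the PA benchmark with the \msscText{} objective, and the handling of the stopping rule are all exactly in line with the paper's argument.
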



\section{Learning from Samples}\label{sec:learning}
In this section we show that our algorithm also works when we are only given sample access to the correlated distribution $\dist$.

We will mainly focus on the first variant with partial updates $\dist|_{V>v}$.
The second variant with full Bayesian updates $\dist|_{V=v}$ is learnable requires full knowledge of the underlying distribution and can only work with sample access if one can learn the full distribution. To see this consider for example an instance where the values are drawn uniformly from $[0,1]^d$. No matter how many samples one draws, it is impossible to know the conditional distribution $\dist|_{V=v}$ after opening the first box for a fresh samples $v$, and the Bayesian update is not well defined.

Variant 1 does not have this problem and can be learned from samples if the costs of the boxes are polynomially bounded by $n$, i.e. if there is a constant $c > 0$ such that for all $b \in \boxes$, $c_b \in [1,n^c]$.  If the weights are unbounded, it is impossible to get a good approximation with few samples. To see this consider the following instance.
Box 1 has cost $1/H \rightarrow 0$, while every other box has cost $H$ for a very large $H>0$. Now consider a distribution
where with probability $1-\frac 1 H \rightarrow 1$, the value in the first box is $0$, and with probability $1/H$ is $+\infty$. In this case, with a small number of samples we never observe any scenario where $v_1 \neq 0$ and believe the overall cost is near $0$. However, the true cost is at least $H \cdot 1/H \ge$ and is determined by how the order of boxes is chosen when the scenario has $v_1 \neq 0$. Without any such samples it is impossible to pick a good order.

Therefore, we proceed to analyze Variant 1 with $\dist|_{V>\sigma}$ in the case when the box costs are similar. We show that polynomial, in the number of boxes, samples suffice to obtain an approximately-optimal algorithm, as we formally state in the following theorem. We present the case where all boxes have cost 1 but the case where the costs are polynomially bounded easily follows.

\begin{theorem}\label{thm:learning}
Consider an instance of Pandora's Box with opening costs equal to 1. 
For any given parameters $\e, \delta>0$, 
using $m = poly(n, 1/\e, \log(1/\delta))$ 
samples from $\dist$, Algorithm~\ref{alg:weitzman_main} (Variant 1) obtains a $4.428 + \e$ approximation policy against the partially-adaptive optimal, with probability at least $1-\delta$.
\end{theorem}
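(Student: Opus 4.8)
The plan is to run Algorithm~\ref{alg:weitzman_main} (Variant~1) on the empirical distribution $\widehat\dist$ obtained from $m$ i.i.d.\ samples of $\dist$, and to argue that the resulting policy, when executed on the true distribution $\dist$, still achieves a $4.428+\e$ approximation. There are two quantities whose estimation we must control: the reservation values $\s_b$ at every step, and the cost incurred by the induced policy. Since all opening costs are $1$, Equation~\eqref{eq:equivalent_reserv} shows that $\s_b$ is a minimum over subsets $A$ of the surviving scenarios of $(|\scenarios| + \sum_{s\in A} v^s_b)/|A|$; equivalently, $\s_b$ is the root of $\E{\vec v\sim\dist}{(\s_b - v_b)^+} = 1$, which is a monotone $1$-Lipschitz function of $\s_b$. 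The first step is therefore a uniform-convergence statement: with $m = \poly(n,1/\e,\log(1/\delta))$ samples, simultaneously for all boxes $b$, all thresholds $\theta$, and all events of the form ``conditioned on the history so far'', the empirical estimate of $\E{}{(\theta - v_b)^+}$ is within $\e'$ of the truth. Because the conditioning events that arise are intersections of at most $n$ halfspaces of the form $\{v_b > \s_b\}$ (one per opened box), the relevant function class has VC/pseudo-dimension polynomial in $n$, so a standard $\e$-net / Chernoff-plus-union-bound argument gives the uniform bound. This is where the assumption that costs are $\Theta(1)$ (so that $\s_b$ and the per-step contributions are polynomially bounded, as the unbounded-cost counterexample in the text shows) is essential.

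The second step is to propagate this per-step estimation error through the execution of the policy. I would define a coupling between the run of the empirically-computed policy and an idealized run, and show by induction on the step $t$ that (i) the set of surviving scenarios $R_t$ is the same in both, up to a small-probability discrepancy set, and (ii) the reservation value $\s_t$ computed empirically is within $O(\e')$ of the one computed from $\dist$, hence the box $b_t$ and covered set $A_t$ chosen empirically are ``almost optimal'' for the true instance: $\s_t^{\text{emp}} \le \s_t^{\text{true}} + O(\e')$. Since there are at most $n$ steps (every step covers at least one scenario, exactly as in the proof of Theorem~\ref{thm:main_vs_pa_partial}), the accumulated additive slack is $O(n\e')$, which we absorb by taking $\e' = \e/\poly(n)$. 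The key point is that the histogram analysis of Theorem~\ref{thm:main_vs_pa_partial} only uses inequality~\eqref{eq:sigma_p}, i.e.\ that the chosen $(\s_t, b_t, A_t)$ minimizes the ratio; an additive $O(\e')$ suboptimality in that inequality degrades the final bound additively by $O(\e')$ per scenario, hence by $O(\e)$ overall after rescaling. Thus the empirical policy satisfies $\alg \le (4.428)\,\opt + \e$, and since $\opt \ge 1$ (at least one box must be opened), this is a $4.428 + \e$ multiplicative approximation.

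A subtlety worth handling explicitly: the empirical policy is defined in terms of empirical scenarios, but must be run on fresh draws from $\dist$. One clean way around this is to observe that the policy produced by the algorithm is, at each node of its decision tree, just ``open box $b_t$ and stop iff $v_{b_t} \le \s_t$'' for a fixed box $b_t$ and fixed threshold $\s_t$ depending only on the observed history of values; this is a well-defined policy on the support of $\dist$, and its true cost is $\E{\vec v\sim\dist}{c(\vec v) + v(\vec v)}$, which we again estimate by the same uniform-convergence tools (the policy tree has $\poly(n)$ nodes). So the analysis reduces to: (a) the true cost of the empirical policy is close to its empirical cost, and (b) the empirical cost of the empirical policy is at most $4.428$ times the empirical optimum, by Theorem~\ref{thm:main_vs_pa_partial} applied to $\widehat\dist$, and (c) the empirical optimum is close to $\opt$ — this last point uses that the partially-adaptive optimal policy for $\dist$, a fixed object with a $\poly(n)$-size description in terms of an opening order plus thresholds, has empirical cost close to its true cost by the same concentration bound.

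The main obstacle I anticipate is step two, the error propagation: naively, a box whose true reservation value is barely above a threshold could be selected or rejected differently under $\widehat\dist$, and over $n$ steps these discrete branching decisions could compound. The fix is to argue that such ``borderline'' scenarios contribute negligibly — either a scenario's reservation-value ordering is robust to $\e'$ perturbation, or the scenario lies in a total-measure-$O(n\e')$ bad set that we charge to the error budget — and to note that the histogram bound is insensitive to exactly which near-optimal $(b_t,A_t)$ is picked, only to the value of $\s_t$. Making the ``borderline set is small'' claim precise, while keeping the sample complexity polynomial, is the technical heart of the argument.
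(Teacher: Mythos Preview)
Your later paragraph --- the one observing that the output of Variant~1 is simply a fixed permutation $\hat\pi$ together with a threshold vector $\hat{\vec\tau}$, and proposing to argue (a) the true cost of $(\hat\pi,\hat{\vec\tau})$ is close to its empirical cost, (b) the empirical cost is at most $4.428$ times the empirical optimum by Theorem~\ref{thm:main_vs_pa_partial} applied to $\widehat\dist$, and (c) the empirical optimum is close to the true $\opt$ --- is essentially the paper's route and is the right one. The paper formalizes (a) via a uniform-convergence lemma over \emph{all} threshold policies $(\pi,\vec\tau)$ simultaneously (Lemma~\ref{lem:closeness}), so no per-step coupling or borderline-scenario argument is needed: whatever $(\hat\pi,\hat{\vec\tau})$ the empirical run happens to produce, its true cost is close to its empirical cost, full stop. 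This completely sidesteps the obstacle you flag in your final paragraph, and your first two paragraphs --- the step-by-step tracking of $\s_t^{\text{emp}}$ against $\s_t^{\text{true}}$ --- should be discarded. Indeed, once the empirical algorithm picks a different first box, there is no longer a meaningful ``true $\s_t$'' to compare to, which is exactly why your proposed fix stays vague.

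There remain two genuine gaps in your (a)--(c). First, in (c) you assert that the partially-adaptive optimum has ``a $\poly(n)$-size description in terms of an opening order plus thresholds.'' This is false for correlated $\dist$: a PA policy fixes the order but may stop based on the entire vector of revealed values, not just the running minimum, and the optimal such rule need not be a threshold rule. The paper repairs this by passing to the stronger \emph{scenario-aware} benchmark $\text{SA}(\pi)$ of~\cite{ChawGergTengTzamZhan2020}, which is parameterized by $\pi$ alone, is learnable with polynomial samples (their Lemma~3.3), and lower-bounds the PA optimum; crucially, the histogram proof of Theorem~\ref{thm:main_vs_pa_partial} already competes against SA, so (b) goes through with $\min_\pi \text{SA}(\pi)$ in place of PA-$\opt$. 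Second, the values $v_b$ are unbounded, so $\alg(\pi,\vec\tau)$ has no a priori bound and uniform convergence cannot hold for it directly. The paper handles this by (i) working with $\alg(\pi,\vec\tau)-\min_b v_b$ rather than $\alg(\pi,\vec\tau)$, which is bounded once thresholds are bounded, and (ii) capping all thresholds at $n/\e$ at a $(1+\e)$ multiplicative loss (Lemma~\ref{lem:capping}); the residual term $\E{\widehat\dist}{\min_b v_b}-\E{\dist}{\min_b v_b}$ is then controlled by repeating the sampling $O(\e^{-1}\log(1/\delta))$ times and taking the best via Markov's inequality. Both devices are absent from your proposal and are needed to make the argument go through.
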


To prove the theorem, we first note that  variant 1 of Algorithm~\ref{alg:weitzman_main} takes a surprisingly simple form, which we call a threshold policy.
It can be described by a permutation $\pi$ of visiting the boxes and a vector of thresholds $\bm{\tau}$ that indicate when to stop. The threshold for every box corresponds to the reservation value the first time the box is opened. To analyze the sample complexity of Algorithm~\ref{alg:weitzman_main}, we study a broader class of algorithms parameterized by a permutation and vector of thresholds given in Algorithm~\ref{alg:learning}.



\begin{algorithm}[ht]
\KwIn{Set of boxes, permutation $\pi$, vector of thresholds $\bm{\tau}\in \mathbb{R}^n$}
best $\leftarrow \infty$\\
\ForEach{$i \in [n]$}{
\uIf{best $>\tau_i$}{
Open box $\pi_i$, see value $v_i$\\
 best $\leftarrow \min(\text{best}, v_i)$
}
\uElse{
Accept best
}
}
\caption{General format of \pbText{} algorithm.}\label{alg:learning}
\end{algorithm}
Our goal now is to show that polynomially many samples from the distribution $\dist$ suffice to learn good parameters for Algorithm~\ref{alg:learning}. We first show a Lemma  that bounds the cost of the algorithm calculated in the empirical $\hat{\dist}$ instead of the original $\dist$ (Lemma~\ref{lem:closeness}), and a Lemma~\ref{lem:capping} that shows how capping the reservation values by $n/\e$ can also be done with negligible cost.

\begin{restatable}{lemma}{closeness}\label{lem:closeness}
Let $\e, \delta > 0$ and let $\dist'$ be the empirical distribution  obtained from $\poly(n, 1/\e,\\ \log(1/\delta))$ samples from $\dist$. Then, with probability $1-\delta$, it holds that 
$$\left| \E{\hat D}{\alg(\pi,\tau) - \min_{b\in \boxes} v_b} - \E{D}{\alg(\pi,\tau) - \min_{b\in \boxes} v_b} \right| \le \e$$
for any permutation $\pi$ and any 
vector of thresholds $\vec v \in \lp[0, \frac n \e\rp]^n$
\end{restatable}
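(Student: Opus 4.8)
The plan is to prove a uniform convergence (generalization) bound over the class of threshold policies described by Algorithm~\ref{alg:learning}. First I would fix the quantity of interest. For a policy parameterized by $(\pi, \vec\tau)$ with $\vec\tau \in [0, n/\e]^n$, define the function $f_{\pi,\vec\tau}(\vec v) = \alg(\pi,\vec\tau)(\vec v) - \min_{b \in \boxes} v_b$, which is exactly the quantity appearing in the statement. I would observe that this function is bounded: the opening cost paid is at most $n$ (all costs are $1$ and at most $n$ boxes are opened), and the value-overhead $\alg - \min_b v_b$ is the difference between the value the policy accepts and the global minimum. On any realized scenario, whenever the policy stops it does so because the best-so-far value is below the threshold $\tau_i$ of the next box, so the accepted value is at most $\max_i \tau_i \le n/\e$; hence $0 \le f_{\pi,\vec\tau}(\vec v) \le n + n/\e$ for all $\vec v$ in the capped regime. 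So each $f$ is a bounded random variable, and $\E{\hat D}{f_{\pi,\vec\tau}}$ is just its empirical average over the $m$ i.i.d. samples.

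Next I would control the complexity of the function class $\mathcal{F} = \{ f_{\pi,\vec\tau} : \pi \in S_n,\ \vec\tau \in [0,n/\e]^n\}$. There are only $n!$ permutations, which contributes a $\log(n!) = O(n \log n)$ term, so it suffices to bound the complexity for a fixed $\pi$ and union bound over $\pi$. For fixed $\pi$, the policy's behavior on a sample $\vec v$ is a piecewise-constant function of $\vec\tau$: the decision at step $i$ only depends on comparisons of the form ``$\text{best}_i > \tau_i$'', i.e. on the sign of $\text{best}_i - \tau_i$, where $\text{best}_i$ is itself a value determined by the earlier decisions. Walking through the $n$ steps, the set of thresholds consistent with a given sequence of stop/continue decisions is a box (product of intervals), and there are at most $2^n$ such decision sequences per sample, hence the "pseudo-dimension"/covering number of $\mathcal{F}$ restricted to $m$ samples is at most $(\text{something like } 2^n m)^{O(n)}$ — a quantity whose logarithm is $\poly(n) \cdot \log m$. (Alternatively, one can discretize each $\tau_i$ to a grid of spacing $\e^2/n$ over $[0, n/\e]$, losing only $O(\e)$ in expectation by the boundedness/Lipschitz-in-$\tau$ property, giving a finite class of size $(n^2/\e^3)^n$ and $\log|\mathcal{F}| = O(n \log(n/\e))$.) Either way, $\log(\text{complexity}(\mathcal{F})) = \poly(n, \log(1/\e))$.

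Then I would invoke a standard uniform-deviation bound (Hoeffding plus union bound over the discretized class, or a VC/pseudo-dimension generalization bound) for the bounded class $\mathcal{F}$: with probability at least $1-\delta$, simultaneously for all $(\pi,\vec\tau)$,
\[
\left| \E{\hat D}{f_{\pi,\vec\tau}} - \E{D}{f_{\pi,\vec\tau}} \right| \;\le\; (n + n/\e)\sqrt{\frac{\log(\text{complexity}(\mathcal{F})) + \log(1/\delta)}{2m}} \;+\; O(\e).
\]
Setting this right-hand side to at most $\e$ and solving for $m$ gives $m = \poly(n, 1/\e, \log(1/\delta))$, which is exactly the claim. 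I would conclude by noting that Variant~1 of Algorithm~\ref{alg:weitzman_main} is a member of this threshold-policy class (with $\tau_i$ equal to the reservation value at the first time box $\pi_i$ is opened, per the remark preceding Algorithm~\ref{alg:learning}), so the bound applies to it in particular.

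The main obstacle I expect is the complexity bound for a fixed permutation: one has to argue carefully that although $\text{best}_i$ depends on the sample and on earlier thresholds, the overall map $\vec\tau \mapsto f_{\pi,\vec\tau}(\vec v)$ only takes polynomially-in-... many distinct "pieces" per sample (at most $2^n$, or after discretization a controlled number), so that the effective cardinality/pseudo-dimension is not exponential in $m$. The cleanest route is probably the discretization argument: show $f_{\pi,\vec\tau}$ changes by $O(\e)$ under an $\e^2/n$-perturbation of each coordinate of $\vec\tau$ (using that capped values live in $[0,n/\e]$ and that each threshold comparison affects the outcome by at most $n/\e$, weighted by the probability a sample falls in the perturbation window, which is where the grid fineness is chosen), reducing to a finite class of size $(n/\e)^{O(n)}$ and a plain Hoeffding union bound. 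Capping the reservation values at $n/\e$ so that $\vec\tau$ genuinely lies in $[0,n/\e]^n$ is handled separately by Lemma~\ref{lem:capping}.
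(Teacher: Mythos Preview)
Your high-level plan---bound $f_{\pi,\vec\tau}$ by $n+n/\e$, control the complexity of $\{f_{\pi,\vec\tau}\}$, and apply a uniform deviation bound---is sound, but it is not the route the paper takes. The paper does not bound the pseudo-dimension of the function class. Instead it \emph{decomposes} the cost: writing $\hat V_i=\min_{j\le i}v_{\pi_j}$, the opening-cost contribution is $\sum_i \Pr{\dist}{\bigwedge_{j<i}\hat V_j>\tau_{j+1}}$, a sum of probabilities of axis-parallel rectangles (VC dimension $O(n)$). For the value contribution the paper introduces an auxiliary uniform variable $r\sim U[0,\tau_{i+1}]$ to rewrite $\E{}{(\hat V_i-\hat V_n)\,\ind{\cdot}}$ as $\tau_{i+1}\cdot\Pr{}{\,\cdot\,}$, where the resulting event is an intersection of $O(n)$ halfspaces; this randomization trick is precisely what converts the real-valued term into a VC-class probability. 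A standard VC bound then yields uniformity over all real $\tau$ simultaneously, and a union bound over the $n!$ permutations finishes. Your pseudo-dimension route can be made rigorous along similar lines (for fixed $\pi$, each $\tau_i$ has at most $m$ sample-dependent critical values $\hat V_{i-1}^{(j)}$, giving a growth function at most $(m+1)^n$ on $m$ points), so that part of the proposal is fine. However, your alternative discretization argument has a real gap: neither $\E{\dist}{f_{\pi,\vec\tau}}$ nor $\E{\hat\dist}{f_{\pi,\vec\tau}}$ is Lipschitz in $\vec\tau$ when the underlying distribution has atoms---and $\hat\dist$ always does---so a fixed grid of spacing $\e^2/n$ need not approximate every $\vec\tau$ to within $O(\e)$. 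The paper's VC decomposition (and your pseudo-dimension route) sidestep this because they are distribution-free.
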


We defer the proof of Lemma~\ref{lem:closeness} the section~\ref{apn:learning} of the Appendix.

\begin{lemma}\label{lem:capping}
Let $\dist$ be any distribution of values. Let $\e > 0$ and consider a permutation $\pi$ and thresholds $\vec \tau$. Moreover, let $\tau'$ be the thresholds capped to $n/\e$, i.e. setting $\tau'_b = \min \{ \tau_b, n/\e \} $ for all boxes $b$. Then,
$$\E{v \sim D}{\alg(\pi,\tau')} \le (1+\e) \E{v \sim D}{\alg(\pi,\tau)}.$$
\end{lemma}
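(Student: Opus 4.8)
The plan is to compare the executions of Algorithm~\ref{alg:learning} under the two threshold vectors $\vec\tau$ and $\vec\tau'$ on a fixed realization $v\sim\dist$, and show that the extra cost incurred by capping is, in expectation, at most an $\e$-fraction of the uncapped cost. First I would fix a scenario $v$ and observe that, since $\tau'_b\le\tau_b$ for every box, the capped policy is (weakly) more eager to stop: at every step the stopping condition $\text{best}>\tau'_i$ is harder to satisfy than $\text{best}>\tau_i$. Hence the capped run opens a prefix of the boxes that the uncapped run would open (in the same permutation order) up to the first index $i$ at which the two runs diverge — namely the first box where $\text{best}\le\tau'_{\pi_i}$ but $\text{best}>\tau_{\pi_i}$ would have continued. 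Up to that point the two executions are identical, so $\alg(\pi,\tau')$ and $\alg(\pi,\tau)$ share the same opening cost so far and the same running minimum $\text{best}$; the only difference is that the capped policy accepts $\text{best}$ now, whereas the uncapped one keeps going and may find a smaller value and/or pay more opening cost.

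The key quantitative step is to bound, for each scenario, the difference $\alg(\pi,\tau')-\alg(\pi,\tau)\ge 0$ (nonnegativity because stopping earlier only changes behavior when $\text{best}\le\tau'_{\pi_i}=n/\e$, and the uncapped run either pays at least the opening cost $1$ of box $\pi_i$ plus whatever it ultimately accepts, or accepts some value; in the worst case the capped run accepts $\text{best}$ while the uncapped run would have gotten a value of $0$ for free-ish, so the difference is at most $\text{best}$). Since divergence happens only when $\text{best}\le n/\e$ at the divergence point, the extra cost paid by the capped policy on that scenario is at most $n/\e$. But divergence also requires that the uncapped policy would have continued, i.e. $\tau_{\pi_i}\ge\text{best}$, and the uncapped policy pays opening cost at least the number of boxes it opens; more importantly, because the uncapped run continues past a box whose threshold exceeds $n/\e$ would only matter when some $\tau_b>n/\e$, I would charge the gain against the cost of the uncapped run. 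Concretely, on any scenario where capping changes the outcome, the uncapped cost $\alg(\pi,\tau)$ is at least $n/\e$: indeed at the divergence step the uncapped run is about to open a box with threshold $\tau_{\pi_i}>n/\e$, meaning it has committed to continuing unless $\text{best}$ drops below $n/\e$; either it eventually accepts a value $\ge n/\e$ (cost $\ge n/\e$), or it drives $\text{best}$ below $n/\e$ which requires opening enough boxes — but opening costs are only $1$ each and there are at most $n$ of them, so this alternative caps the gain at $\text{best}<n/\e$ while $\alg(\pi,\tau)$ need not be large. The cleaner charging argument: the \emph{extra} amount the capped policy pays over the uncapped policy on a divergent scenario is at most $\text{best}\le\tau'_{\pi_i}= n/\e$, and the uncapped policy, having chosen a threshold $\tau_{\pi_i}>n/\e$ for a box it reaches, pays total cost at least $\e^{-1}n\cdot\Pr[\text{the eventual accepted value exceeds } n/\e]$... so I would instead argue: $\alg(\pi,\tau')\le\alg(\pi,\tau)+\sum_i \ind{\text{divergence at step }i}\cdot\text{best}_i$, and $\text{best}_i\le n/\e$, while on the event of divergence at step $i$ the uncapped run pays at least — here is the honest bound — at least $\tau_{\pi_i}\ge n/\e$ only if it accepts at that future box, which it might not.

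Given this subtlety, the robust route is: show $\alg(\pi,\tau')\le\alg(\pi,\tau) + \sum_{b:\ \tau_b>n/\e}\ind{E_b}\,\tau'_b$ where $E_b$ is the event that the capped policy stops at box $b$ with $\text{best}\in(n/\e,\tau_b]$... but $\text{best}\le\tau'_b=n/\e<\text{best}$ is contradictory, so in fact when the capped policy stops at $b$ we have $\text{best}\le n/\e$. On this event the uncapped policy, reaching box $b$, would open it and all subsequent boxes whose thresholds it clears; crucially it accepts a final value $v(s)$ with $v(s)\le\text{best}\le n/\e$ or it opens $\Omega(\cdot)$ more boxes — either way I claim $\alg(\pi,\tau)\ge\alg(\pi,\tau')/(1+\e)$ pointwise fails, so the statement must be in expectation only. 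Therefore the final step is a global charging: let $G=\{s:\text{capping changes the run}\}$; for $s\in G$ the gain is $\le n/\e$, and for $s\in G$ the uncapped run reaches a box with threshold $>n/\e$, which I will show implies $\E{v\sim\dist}{\alg(\pi,\tau)\mid s\in G}\ge n/\e$ by the defining property that a threshold policy only proceeds past such a box when the expected continuation value justifies it (formally, since $\tau_b$ is a valid stopping threshold, $\E{}{\min(\text{best}, v(\text{rest}))\cdot\text{stuff}}$)... The hard part, and the main obstacle, is exactly this: making precise why reaching a box with threshold exceeding $n/\e$ forces the uncapped policy's expected cost (conditioned on the divergence event) to be at least $n/\e$, so that the additive $\e$-scale slack $\sum_{s\in G} n/\e$ is bounded by $\e\cdot\E{}{\alg(\pi,\tau)}$. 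This likely uses that, for Algorithm~\ref{alg:learning} run with \emph{these particular} thresholds coming from reservation values (or at least thresholds for which continuing is "justified"), the event $s\in G$ and the large realized cost are positively correlated, or simply that $\Pr[s\in G]\cdot(n/\e)\le\e\,\E{}{\alg(\pi,\tau)}$ because on $G$ the cost is at least $n/\e\cdot$(something), whereas off $G$ it is nonnegative — I would nail down the constant by bounding $\Pr[s\in G]$ in terms of $\E{}{\alg(\pi,\tau)}/(n/\e)$ via Markov-type reasoning applied to the sub-run up to the divergence box. I expect the write-up to fix a scenario, track $\text{best}$ along the common prefix, invoke that the divergence box has $\tau_b>n/\e$, and conclude with Markov's inequality on the uncapped cost.
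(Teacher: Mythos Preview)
Your very first claim is backwards, and everything downstream inherits the error. In Algorithm~\ref{alg:learning}, the test $\text{best}>\tau_i$ is the \emph{continue} condition (open box $\pi_i$); the policy \emph{stops} in the else-branch, i.e.\ when $\text{best}\le\tau_i$. Hence lowering thresholds from $\tau$ to $\tau'\le\tau$ makes the continue condition $\text{best}>\tau'_i$ \emph{easier} to satisfy, so the capped policy is (weakly) \emph{less} eager to stop, not more. On any fixed realization the capped run opens a (weak) superset of the boxes opened by the uncapped run, and the first point of divergence is an index $i$ with $\tau'_{\pi_i}=n/\e<\text{best}\le\tau_{\pi_i}$: the uncapped run stops there, the capped run keeps going.

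With the correct direction the argument is a two-line pointwise bound (no Markov, no expectation charging). At the divergence step the uncapped run accepts a value $v=\text{best}>n/\e$, so its total cost satisfies $c+v>n/\e$. The capped run continues: it eventually accepts some $v'\le v$ (the running minimum only decreases) and pays at most $n$ extra in opening cost (unit costs, at most $n$ boxes), so $c'+v'\le (c+n)+v\le (c+v)+\e\,(n/\e)<(1+\e)(c+v)$. Taking expectations gives the lemma. All of your struggle---the failed pointwise bound, the attempted correlation/Markov charging, the appeal to ``thresholds coming from reservation values''---arises solely from analyzing the wrong branch of the divergence, where indeed no such control is available.
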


\begin{proof}

We compare the expected cost of $\alg$ with the original thresholds and the transformed one $\alg'$ with the capped thresholds. For any value vector $\vec v \sim \dist$, either (1) the algorithms stopped at the same point having the same opening cost and value, or (2) $\alg$ stopped earlier at a threshold $\tau > n/\e$, while  $\alg'$ continued. In the latter case, the value $v$ that $\alg$ gets is greater than $n/\e$, while the value $v'$ that $\alg'$ gets is smaller, $v' \le v$. For such a scenario, the opening cost $c$ of $\alg$, and the opening cost $c'$ of $\alg'$ satisfy $c' \le c + n$. Thus, the total cost is $c' + v' \le c + v + n \le (1+\e) (c+v)$
Overall, we get that 
\[ \E{\dist}{\alg'} \leq
\E{\dist}{\alg}(1+\e).\]
\end{proof}

\begin{proof}[Proof of Theorem~\ref{thm:learning}]
With $\poly(n,\e,\log(1/\delta))$ samples from $\dist$, we obtain an empirical distribution $\hat \dist$. 

From Lemma~\ref{lem:closeness}, we have that with probability at least $1-\delta \e / \log(1/\delta)$, the following holds
\begin{align}\label{eq:error_thres}
    \bigg|  \E{v \sim \hat D}{\alg(\pi,\tau) - \min_{b\in \boxes} v_b}  - \E{v \sim D}{\alg(\pi,\tau) - \min_{b\in \boxes} v_b} \bigg| \le \e
\end{align}
for any permutation $\pi$ and any 
vector of thresholds $\vec v \in \lp[0, \frac n \e\rp]^n$. This gives us that we can estimate the cost of a threshold policy accurately.

To compare with the set of all partially adaptive policies that may not take the form of a threshold policy, we consider the set of scenario aware policies (SA).
These are policies $\text{SA}(\pi)$  parameterized by a permutation $\pi$ of boxes and are forced to visit the boxes in that order. However, they are aware of all values in the boxes in advance and know precisely when to stop. These are unrealistic policies introduced in \cite{ChawGergTengTzamZhan2020} which serve as an upper bound to the set of all partially adaptive policies.

As shown in \cite{ChawGergTengTzamZhan2020} (Lemma 3.3), scenario-aware policies are also learnable from samples. With probability at least $1-\delta \e / \log(1/\delta)$, it holds that for any permutation $\pi$
\begin{align}\label{eq:error_sa}
    \bigg| & \E{v \sim \hat D}{SA(\pi) - \min_{b\in \boxes} v_b} - \E{v \sim D}{SA(\pi) - \min_{b\in \boxes} v_b} \bigg| \le \e.
\end{align}
The $\alpha$-approximation guarantees (with $a \approx 4.428$) of Algorithm~\ref{alg:weitzman_main} hold even against scenario aware policies as there is no restriction on how the partially-adaptive policy may choose to stop. So for the empirical distribution, we can compute a permutation $\hat \pi$ and thresholds $\hat \tau$ such that:
$$
\E{\hat D}{\alg(\hat \pi,\hat \tau)} \le \alpha \cdot \min_\pi \E{\hat D}{  SA(\pi) }
$$
Clipping the thresholds to obtain $\hat \tau' = \min\{ \hat \tau, n/\e \}$, and letting $\Delta = \E{v \sim \hat D}{\min_{b\in \boxes} v_b} - \E{v \sim D}{\min_{b\in \boxes} v_b}$, we have that:
\begin{align*}
&\E{D}{\alg(\hat \pi,\hat \tau')} 
\le \E{\hat D}{\alg(\hat \pi,\hat \tau')} - \Delta + \e \\
&\le (1+\e) \E{\hat D}{\alg(\hat \pi,\hat \tau)} + \Delta + \e / 4 \\
&\le (1+\e) \alpha \cdot \min_{\pi} \E{\hat D}{SA(\pi)} - \Delta + \e / 4 \\
&\le (1+\e) \alpha \cdot \min_{\pi} \E{D}{SA(\pi)} + O( \Delta + \e )
\end{align*}

By Markov's inequality, we have that $\Pr{}{ \E{v \sim \hat D}{\min_{b\in \boxes} v_b} \le (1+\e) \E{v \sim D}{\min_{b\in \boxes} v_b} } \ge \frac \e { 1+ \e} \ge \e/2$.

Thus, repeating the sampling process $\frac {O(\log 1/\delta)} {\e}$ times and picking the empirical distribution with minimum $\E{v \sim \hat D}{\min_{b\in \boxes} v_b}$ satisfies $\Delta \le \e \E{v \sim D}{\min_{b\in \boxes} v_b}$ with probability at least $1-\delta$ and simultaneously satisfies equations \eqref{eq:error_thres} and \eqref{eq:error_sa}.

This shows that $\E{D}{\alg(\hat \pi,\hat \tau')} \le (1+O(\e)) \alpha \cdot \min_{\pi} \E{D}{SA(\pi)}$ which completes the proof by rescaling $\e$ by a constant.

\end{proof}

\bibliography{reference}
\bibliographystyle{alpha}

\appendix
\section{Appendix}

\subsection{Proofs from Section~\ref{sec:main_algo}}\label{apn:main_algo}

\mainThm*
The tighter guarantee proof follows the steps of the proof in section~\ref{subsec:proof_algo} for the opening cost, but provides a tighter analysis for the values cost. 
\begin{proof}[Tight proof of Theorem~\ref{thm:main_vs_pa_partial}]
Denote by $\sigma_s$ the reservation value for scenario $s$ when it was covered by $\alg$ and by $\tree$ the set of boxes opened i.e. the steps taken by the algorithm. Then we can write the cost paid by the algorithm as follows

\begin{equation}\label{eq:alg_cost_tight}
\alg  = \frac{1}{|\scenarios|} \sum_{s\in \scenarios} \sigma_s  = \frac{1}{|\scenarios|}\sum_{p\in \tree} |A_t| \sigma_p .
\end{equation}

We use the same notation as section~\ref{subsec:proof_algo} which we repeat here for convenience. Consider any point $p$ in the $\alg$ histogram, and let $s$ be its corresponding scenario and $t$ be the time this scenario is covered.
\begin{itemize}
\item $R_t:$ set of uncovered scenarios at step $t$
    \item $A_t:$ set of scenarios that $\alg$ chooses to cover at step $t$
    \item $c^*$: the opening cost such that $\gamma|R_t|$ of the scenarios in $R_t$ have opening cost less than $c^*$
    \item $R_{\text{low}} = \{s\in R_t: c_s^\opt \leq c^*\}$ the set of these scenarios
    \item $v^*$: the value of scenarios in $R_{\text{low}}$ such that $b|R_{\text{low}}|$ of the scenarios have value less than $v^*$
    \item $L = \{s\in R_{\text{low}}: v_s^\opt\leq v^*\}$ the set of scenarios with value at most $v^*$
    \item $B_L$: set of boxes the optimal uses to cover the scenarios in $L$ of step $t$
\end{itemize}
 The split described in the definitions above is again shown in Figure~\ref{fig:t_v_split_apn}, and the constants $1> \beta,\gamma>0$ will be determined in the end of the proof.

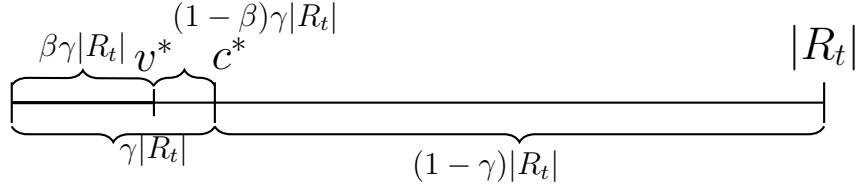
\begin{figure}[H]
				\centering
    \begin{tikzpicture}[scale=0.9]

\pgfmathtruncatemacro{\labelsY}{0.2};
\pgfmathtruncatemacro{\split}{3};

\draw[-,very thick, black] (0,0) -- (\split*0.7,0);
\draw[-,thick] (\split*0.7,0) -- (12,0);

\draw [decorate,black,thick,decoration={brace,amplitude=6pt}] (0,0.25) -- (\split*0.7,0.24) node[]{};
\draw [decorate,thick,decoration={brace,amplitude=6pt, mirror}] (0,-0.35) -- (\split,-0.35) node[]{};
\draw [decorate,thick,decoration={brace,amplitude=6pt, mirror}] (\split,-0.35)--(12,-0.35) node[]{};
\draw [decorate,thick,decoration={brace,amplitude=6pt}] (\split*0.7,0.25) -- (\split,0.25) node[]{};

\draw[-, thick] (0,0.3)-- (0,-0.3) node[]{};
\draw[-, thick] (12,0.3)-- (12,-0.3) node[label={[yshift=0.4cm]{ {\Large $|R_t|$}}}]{};
\draw[-, thick] (\split*0.7,0.2)-- (\split*0.7,-0.2) node[label={[yshift=0.3cm] {\Large $v^*$} }]{};
\draw[-, thick] (\split,0.3)-- (\split,-0.3) node[label={[xshift=0.2cm,yshift=0.42cm]{\Large $c^*$}}]{};

\draw (\split*0.7,-1.25) node[label={$\gamma|R_t|$}]{};
\draw (7,-1.5) node[label={$(1-\gamma)|R_t|$}]{};
\draw (\split*0.35,0.25) node[label={\textcolor{black}{$\beta \gamma|R_t|$}}]{};
\draw (1.2*\split,0.7) node[label={$(1-\beta)\gamma|R_t|$}]{};

\end{tikzpicture}
				\caption{Split of scenarios in $R_t$.}
    \label{fig:t_v_split_apn}
\end{figure}

Continuing from equation~\eqref{eq:alg_cost_tight} we obtain the following.
 \begin{align*}
    \alg & \leq \frac{1}{|\scenarios|}\sum_{t\in \tree} |A_t| \frac{|R_t|\sum_{b\in B_L} c_b 
        + \sum_{s\in L }v^\opt_s}{|L|} & \text{Inequality~\ref{eq:sigma_UB}}\\
    & \leq \frac{1}{|\scenarios|} \sum_{t\in\tree} \lp( |A_t|\frac{c^*}{\beta \gamma} 
        + \frac{\sum_{s\in L}v^\opt_s}{|L|}\rp) 
        & \text{Ineq.~\ref{eq:sigma_UB} and } |L| = \gamma \beta|R_t| \\
     & \leq \frac{\opt_o}{\beta \gamma(1-\gamma)} \sum_{t\in \tree}\frac{|A_t|}{|\scenarios|} + \sum_{t\in\tree}\frac{|A_t|}{|\scenarios|} \frac{\sum_{s\in L}v^\opt_s}{|L|} & \text{Since }c^* \leq \opt_o/(1-\gamma)\\
     & = \frac{\opt_o}{\beta \gamma(1-\gamma)} + \sum_{p\in\tree}\frac{|A_t|}{|\scenarios|} \frac{\sum_{s\in L}v^\opt_s}{|L|} 
             & \text{Since} \sum_{t}|A_t| = |\scenarios|
\end{align*}
                     
Where in the second to last inequality we used the same histogram argument from section~\ref{subsec:proof_algo}, to bound $c^*$ by $\opt_o/(1-\gamma)$. 

To bound the values term, observe that if we sorted the optimal values $v^\opt_s$ that cover each scenario by decreasing order, and denote $j_s$ the index of $v_s^\opt$ in this ordering, we add $v_s^\opt$ multiplied by the length of the interval every time $j_s\in \big[(1-\beta)\gamma|R_t|, \gamma |R_t|\big]$. This implies that the length of the intervals we sum up for $v_s^\opt$ ranges from $j_s/\gamma$ to $j_s/((1-\beta)\gamma)$, therefore the factor for each $v_s^\opt$ is
%
%
\[
 \frac{1}{\gamma} \sum_{i=j_s/\gamma}^{j_s/(1-\beta)\gamma } \frac{1}{i} 
\leq \frac{1}{\gamma} \log\lp( \frac{1}{1-\beta} \rp)\]

    We want to balance the terms $1/(\beta \gamma(1-\gamma))$ and $1/\gamma\log(1/(1-\beta))$ which gives that \[
    \gamma= 1- \frac{1}{\beta \log\lp( \frac{1}{1-\beta}\rp)}.\] 
    Since we balanced the opening cost and value terms, by substituting the expression for $\gamma$ we get that the approximation factor is 
    \[ 
    \frac{1}{\beta \gamma(1-\gamma)} = \frac{\beta \log^2\lp( \frac{1}{1-\beta}\rp)}{\beta\log\lp( \frac{1}{1-\beta}\rp) -1}.
    \]
    Numerically minimizing that ratio for $\beta$ and ensuring that $0<\beta,\gamma<1$ we get that the minimum is $4.428$ obtained at $\beta \approx 0.91$ and $\gamma\approx 0.55$.
\end{proof}

\subsection{Proofs from Section~\ref{subsec:equal}}\label{apn:equal}

\mainThmVarEqual*
\begin{proof}[Continued proof of Theorem~\ref{thm:main_vs_pa_equal}]
We now proceed to give the bound on the weights of the nodes of $\tree_\alg$. Consider any node $u$. We have that the weights at this node are equal to \[\s_u = \frac{c_{b_u} |R_u| + \sum_{s\in A_t}v^s_{b_{u}}}{|A_t|} 
		\leq \frac{c_b|R_u| + \sum_{s\in A}v^s_{b}}{|A|}\]
where the last inequality holds for all $A\subseteq R_u$ and any $b\in \boxes$.

Let $c^*_u$ the opening cost such that $\gamma|R_u|$ of the scenarios in $R_u$ have opening cost less than $c^*_u$, and by $R_{\text{low}} = \{s\in R_u: c_s^\opt \leq c^*_u\}$ the set of these scenarios. Similarly denote by $v^*_u$ the value of scenarios in $R_{\text{low}}$ such that $\beta |R_{\text{low}}|$ of the scenarios have value less than $v^*_u$ and by $L = \{s\in R_{\text{low}}^p: v_s^\opt\leq v^*_u\}$ these scenarios. This split is shown in Figure~\ref{fig:t_v_split}.

Note that, $c^*_u$ corresponds to the weights of node $u$ in $\tree^{(1-\gamma)}_{\opt_o}$, while the weights of node $u$ at $\tree^{(1-\gamma)}_{\opt_v}$ are at least $v^*_u$.

Let $B_L$ be the set of boxes that the optimal solution uses to cover the scenarios in $L$. Let $L_b \subseteq L \subseteq R_u$ be the subset of scenarios in $L$ that choose the value at box $b$ in $\opt$. Using inequality~\eqref{eq:sigma_p} with $b\in B_L$ and $A = L_b$, we obtain $\sigma_u |L_b| \leq c_b |R_u| + \sum_{s \in L_b} v_s^\opt $, and by summing up the inequalities for all $b\in B_L$ we get 
\begin{equation}\label{eq:sigma_UB}
\sigma_u \leq \frac{|R_u| \sum_{b\in B_L} c_b + \sum_{s\in L} v_s^\opt}{|L|}
		\leq \frac{|R_u| c^* + \sum_{s\in L} v_s^\opt}{|L|} 
		\leq \frac{c^*_u}{\beta \cdot \gamma} + v^*_u
\end{equation}
where for the second inequality we used that the cost for covering the scenarios in $L$ is at most $c^*_u$ by construction, and in the last inequality that $|L| = |R_t|/(\beta \cdot \gamma)$. We consider each term above separately, to show that the point $p$ is within the histograms.
\end{proof}
\genHist*

\begin{proof}[Proof of Lemma~\ref{lem:general_histogram}]

We denote by $\tree_u$ the subtree rooted at $u$, by $W(\tree) = \{ w: w \in \vec w_v \text{ for } v \in \tree \}$ the (multi)set of weights in the tree $\tree$. 
%
%
Denote, by $q^\rho( \tree )$ be the top $\rho$ percentile of all the weights in $\tree$. Finally, we define $Q(\rho | \tree)$ for any tree $\tree$ as follows:
\begin{itemize}
\item We create a histogram $H(x)$ of the weights in $W(\tree)$ in increasing order.
\item We calculate the area enclosed within $(1-\rho) |W(\tree)|$ until $|W(\tree)|$:
$$Q\lp( \rho | \tree \rp) = \int_{(1-\rho)|W(\tree)|}^{|W(\tree)|} H(x) dx$$
This is approximately equal to the sum of all the values greater than $q^\rho( \tree )$ with values exactly $q^\rho( \tree )$ taken fractionally so that exactly $\rho$ fraction of values are selected.
\end{itemize}

We show by induction that for every node $u$, it holds that $\rho \cdot \text{cost}(\tree^{(\rho)}_u) \le Q\lp( \rho | \tree \rp)$
\begin{itemize}
\item For the base case, for all leaves $u$, the subtree $\tree_u$ only has one node and the lemma holds as $\rho q^\rho(\tree_u) \le Q\lp( \rho | \tree_u \rp)$. 
\item Now, let $r$ be any node of the tree, and denote by $\child(r)$ the set of the children nodes of $r$.
\begin{align*}
    \rho \cdot \text{cost}(\tree^{(\rho)}_r) & = \rho \cdot
    q^\rho(\tree_r) |\vec w_r| + \rho  \cdot \sum_{v \in \child(r)} \text{cost}(\tree^{(\rho)}_v) & \text{Definition of cost($\tree^{(\rho)}_r$)} \\
    & \leq \rho \cdot q^\rho(\tree_r) |\vec w_r|  +  \rho \cdot \sum_{v\in \child(r)} Q(\rho| T_v)  
        & \text{From induction hypothesis}\\
        & \leq \rho \cdot q^\rho(\tree_r) |\vec w_r|  + Q\lp(\rho \frac {|W(\tree_r)| - |\vec w_r|} {|W(\tree_r)|}\, \Biggr| \, T_r\rp)& \text{Since }\tree_v \subseteq T_r\\
    & \leq Q\lp( \rho | T_r \rp)& 
    \end{align*}
    \end{itemize}
    
    The second-to-last inequality follows since $Q$ is defined as the area of the largest weights of the histogram. Including more weights only increases and keeping the length of the integration range the same (equal to $\rho (|W(\tree_r)| - |\vec w_r|)$) can only increase the value $Q$. 

    The last inequality follows by noting that if $H(x)$ is the histogram corresponding to the values of $\tree_r$, then
    \begin{align*}
    Q\lp( \rho | T_r \rp) - Q\lp( \rho \frac {|W(\tree_r)| - |\vec w_r|} {|W(\tree_r)|}\, \Biggr| \, T_r \rp) &= \int_{(1-\rho)|W(\tree_r)|}^{|W(\tree_r)|} H(x) dx 
    -
    \int_{(1-\rho)|W(\tree_r)| + \rho |\vec w _r|}^{|W(\tree_r)|} H(x)dx \\
    &=\int_{(1-\rho)|W(\tree_r)|}^{(1-\rho)|W(\tree_r)| + \rho |\vec w _r|} H(x) dx 
    \ge 
    \int_{(1-\rho)|W(\tree_r)|}^{(1-\rho)|W(\tree_r)| + \rho |\vec w _r|} q^\rho(\tree_r) dx \\
    &= \rho q^\rho(\tree_r) |\vec w _r|
    \end{align*}
    where the inequality follows since
    $H(x) \ge q^\rho(\tree_r)$ for $x \ge (1-\rho)|W(\tree_r)|$ by the definition of $q^\rho(\tree_r)$ as the top-$r$ quantile of the weights in $\tree_r$.
    

    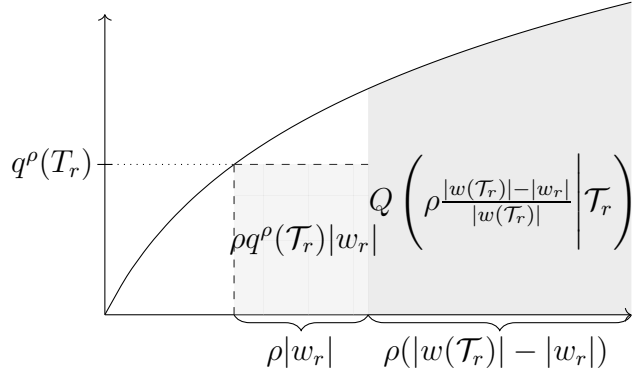
\begin{figure}[H]
				\centering
     \begin{tikzpicture}
\pgfmathtruncatemacro{\qpos}{2};
\pgfmathtruncatemacro{\totalX}{7};
\pgfmathtruncatemacro{\braceY}{-0.15};


    \fill [gray!30, opacity=.5, domain=3.5:\totalX, variable=\x]
      (3.5, 0)
      -- plot ({\x}, {2*ln (\x+1)})
      -- (\totalX, 0)
      -- cycle;

     \node[] at (\totalX-1.7, \qpos-0.5){$Q\lp( \rho \frac{|w(\tree_r)| - |w_{r}| }{|w(\tree_r)|}\Bigg| \tree_r \rp)$}; 
     
  \draw[domain=0:\totalX, smooth, variable=\x, black] plot ({\x}, {2*ln (\x+1)});

    \draw[->] (0,0)--(\totalX,0);

    \draw[->] (0,0)--(0,4);
    \node[] at (-0.7,\qpos) {$q^\rho(T_r)$};
    \draw[-] (-0.1,\qpos)--(0.1,\qpos);

    \draw[dotted] (0,\qpos)--(e-1,\qpos);
    \draw[dashed] (e-1,\qpos)--(3.5,\qpos);
    \draw[dashed] (e-1,2) -- (e-1,0);

    \fill[pattern={Lines[angle=45, distance=6mm, line width=13mm, xshift=5mm]}, pattern color=gray!30,opacity=.5] (e-1,0)--(e-1,\qpos)--(3.5,\qpos)--(3.5,0);    
    \node[] at (\totalX-4.35,\qpos-1) {$\rho q^\rho(\tree_r)|w_r|$};
    
    \draw [decorate,decoration={brace,amplitude=6pt, mirror}] (e-1,\braceY) -- (3.5,\braceY) node[]{};
    \node[] at (2.6,-0.5) {$\rho |w_r|$};

    \draw [decorate,decoration={brace,amplitude=6pt, mirror}] (3.5,\braceY) -- (\totalX,\braceY) node[]{};
    
    \node[] at (5.2,-0.5) {$\rho (|w(\tree_r)| - |w_r|)$};
\end{tikzpicture}
				\caption{Picture depicting the proof above.}
    \label{fig:histogram_apn}
\end{figure}
\end{proof}

\subsection{Proofs from Section~\ref{sec:learning}}\label{apn:learning}

\closeness*
\begin{proof}[Proof of Lemma~\ref{lem:closeness}]

We first argue that we can accurately estimate the cost for any vector of thresholds $\vec \tau$ when the order of visiting boxes is fixed. 

Consider any fixed permutation $\pi= \pi_1, \pi_2, \ldots , \pi_n$ be any permutation of the boxes, we relabel the boxes wlog so that $\pi_i$ is box $i$.

Denote by $\hat{V}_i = \min_{j\leq i} v_j$, and observe that $\hat{V}_i$ is a random variable that depends on the distribution $\dist$.  Then we can write the expected cost of the algorithm as the expected sum of the opening cost and the 
chosen value: $\E{\dist}{\alg} = \E{\dist}{\alg_o} + \E{\dist}{\alg_v}$. We have that:
\begin{align*}
\E{\dist}{\alg_o}  =\sum_{i=1}^n \Pr{\dist}{\text{reach }i} = \sum_{i=1}^n \Pr{\dist}{\bigwedge_{j=1}^{i-1} (\hat{V}_j > \tau_{j+1})}
\end{align*}
Moreover, we denote by $\overline{V}^i_{\vec \tau} = \bigwedge_{j=1}^{i-1} \lp( \hat{V}_j > \tau_{j+1} \rp)$ and we have
\begin{align*}
\E{\dist}{\alg_v - \hat{V}_n} & =\sum_{i=1}^n  \E{\dist}{ (\hat{V}_i - \hat{V}_n) \cdot \ind{\text{stop at }i}}\\
& = \sum_{i=1}^{n-1} \E{\dist}{(\hat{V}_i - \hat{V}_n) \cdot \ind{\overline{V}^i_{\vec \tau} \wedge \lp( \hat{V}_i \leq \tau_{i+1}\rp) }} \\
&  = \sum_{i=1}^{n-1} \mathbb{E}_{\dist} \Bigg[ \tau_{i+1} \Pr{r \sim U[0,\tau_{i+1}]}{r < \hat{V}_i - \hat{V}_n} \cdot \ind{ \overline{V}^i_{\vec \tau} \wedge \lp( \hat{V}_i \leq \tau_{i+1}\rp) }\Bigg]\\
&= \sum_{i=1}^{n-1} \tau_{i+1}  \textbf{Pr}_{\dist, r \sim U[0,\tau_{i+1}]}\Bigg[  {\overline{V}^i_{\vec \tau} \wedge \lp( r + \hat{V}_n \le \hat{V}_i \leq \tau_{i+1}\rp) }\Bigg]
\end{align*}

In order to show our result, we use from~\cite{BlumEhreHausWarm1989} that for a class with VC dimension $d<\infty$ that we can learn it with error at most $\e$ with probability $1-\delta$ using  $m=\text{poly}( 1/\e, d, \log\lp(  1/\delta\rp))$ samples.

Consider the class $\mathcal{F}_{\vec \tau}(\hat{V}, r) = {\bigwedge_{j=1}^{i-1} (\hat{V}_j > \tau_{j+1})} $. This defines an axis parallel rectangle in $\mathbb{R}^i$, therefore its VC-dimension is $2i$. Using the observation above we have that using $m=\text{poly}( 1/\e, n, \log\lp(  1/\delta\rp))$ samples, , with probability at least $1-\delta$, it holds
\[
\bigg|\Pr{\dist}{\mathcal{F}_{\vec \tau}(\hat{V}, r)}  - \Pr{\hat{\dist}}{\mathcal{F}_{\vec \tau}(\hat{V}, r)} \bigg| \leq \e \]
for all $\vec \tau \in\mathbb{R}^n$.
%

Similarly, the class $\mathcal{C}_{\vec \tau}(\hat{V}, r)=\bigwedge_{j=1}^{i-1} \lp( \hat{V}_j > \tau_{j+1} \rp) \wedge \lp( r + \hat{V}_n \le \hat{V}_i \leq \tau_{i+1}\rp)$ has VC-dimension $O(n)$ since it is an intersection of at most $n$ (sparse) halfspaces. Therefore, the same argument as before applies and for $m=\text{poly}( 1/\e, n, \log\lp(  1/\delta\rp))$ samples, we get 
%
\begin{align*}
\bigg|  \Pr{\dist, r \sim U[0,\tau_{i+1}]}{  \mathcal{C}_{\vec \tau}(\hat{V},r)}  - \Pr{\hat{\dist}, r \sim U[0,\tau_{i+1}]}{  \mathcal{C}_{\vec \tau}(\hat{V},r)} \bigg| \leq \e 
\end{align*}

for all $\vec \tau \in\mathbb{R}^n$, with probability at least $1-\delta$.

Putting it all together, the error can still be unbounded if the thresholds $\tau$ are too large. However, since we assume that $\tau_i \leq n/\e$ for all $i\in [n]$, $\poly(n, 1/\e, \log(1/\delta))$ samples suffice to get $\e$ error overall, by setting $\e \leftarrow \frac {\e^2}{n}$.

While we obtain the result for a fixed permutation, we can directly obtain the result for all $n!$ permutations through a union bound. Setting $\delta \leftarrow \frac \delta {n!}$ only introduces an additional factor of $\log(n!) = n \log n$ in the overall sample complexity.
\end{proof}
\end{document}